\renewcommand\thesection{\arabic{section}}
\def\@settitle{\begin{center}%
		\baselineskip14\p@\relax
		\normalfont\LARGE\scshape\bfseries
		\@title
	\end{center}%
}
\def\@setauthors{%
  \begingroup
  \def\thanks{\protect\thanks@warning}%
  \trivlist
  \centering\footnotesize \@topsep30\p@\relax
  \advance\@topsep by -\baselineskip
  \item\relax
  \author@andify\authors
  \def\\{\protect\linebreak}%
  \authors%
  \ifx\@empty\contribs
  \else
    ,\penalty-3 \space \@setcontribs
    \@closetoccontribs
  \fi
  \endtrivlist
  \endgroup
}
\def\subsection{\@startsection{subsection}{2}%
	\z@{.5\linespacing\@plus.7\linespacing}{.5\linespacing}%
	{\normalfont\large\bfseries}}
\definecolor{darkblue}{rgb}{0.0, 0.0, 0.45}
\date{\today}
\renewcommand\thesection{\arabic{section}} 
\theoremstyle{theorem}
\newtheorem{theorem}{Theorem}[section]
\newtheorem{proposition}[theorem]{Proposition}
\newtheorem{lemma}[theorem]{Lemma}
\newtheorem{assumption}[theorem]{Assumption}
\newtheorem{problem}[theorem]{Problem}
\newtheorem{remark}[theorem]{Remark}
\newtheorem{definition}[theorem]{Definition}
\newcommand{\R}{\mathbb{R}}
\newcommand{\N}{\mathbb{N}}
\newcommand{\diff}{\mathrm{d}}
\DeclareMathOperator{\e}{e}
\begin{document}

\title[BIA Detection]{Stealthy bias injection attack detection based on Kullback–Leibler divergence in stochastic linear systems}
\author{Jingwei Dong, André M. H. Teixeira}
\thanks{Jingwei Dong and André M. H. Teixeira are with the Department of Information Technology, Uppsala University, SE-75105 Uppsala, Sweden. This work is supported by the Swedish Research Council under
the grants 2021-06316 and 2023-05234, the Swedish
Foundation for Strategic Research, and the Knut and Alice Wallenberg
Foundation.}

\begin{abstract}
    This paper studies the design of detection observers against stealthy bias injection attacks in stochastic linear systems under Gaussian noise, considering adversaries that exploit noise and inject crafted bias signals into a subset of sensors in a slow and coordinated manner, thereby achieving malicious objectives while remaining stealthy.
    To address such attacks, we formulate the observer design as a max–min optimization problem to enhance the detectability of worst-case BIAs, which attain a prescribed attack impact with the least detectability evaluated via Kullback–Leibler divergence. 
    To reduce the computational complexity of the derived non-convex design problem, we consider the detectability of worst-case BIAs at three specific time instants: attack onset, one step after attack occurrence, and the steady state. 
    We prove that the Kalman filter is optimal for maximizing the BIA detectability at the attack onset, regardless of the subset of attacked sensors. 
    For the one-step and steady-state cases, the observer design problems are approximated by bi-convex optimization problems, which can be efficiently solved using alternating optimization and alternating direction method of multipliers.
    Moreover, more tractable linear matrix inequality relaxations are developed.
    Finally, the effectiveness of the proposed stealth-aware detection framework is demonstrated through an application to a thermal system.
\end{abstract}

\maketitle

\section{Introduction}
To enable high-performance operation in modern industrial infrastructures, including smart grids, intelligent transportation, and manufacturing, cyber-physical systems (CPSs) have been developed by seamlessly integrating communication, computation, control, and physical processes.
However, the reliance of CPSs on sensors, communication networks, and IT infrastructures, e.g., Supervisory Control and Data Acquisition systems, exposes them to cyber threats. 
Adversaries can exploit vulnerabilities in these components to compromise system performance~\cite{ahmed2013cyber,peng2019survey}.
Conventional information-theoretic security mechanisms, such as encryption, firewalls, and digital watermarking, provide essential protection but ignore the physical dynamics of the controlled plant.
Therefore, they may be insufficient against cyber attacks designed from a control-theoretic viewpoint~\cite{jiang2023monitoring}. 
For instance, such protection schemes failed to prevent the cyber attack on the Davis-Besse nuclear power plant in Ohio, USA~\cite{cardenas2009challenges}.
Consequently, there has been growing attention over the past decade to research on the cybersecurity of CPSs from the control-theoretic perspective.

Cybersecurity research in CPSs can be approached from both the adversarial and defensive perspectives. 
Specifically, adversaries aim to impair system operation, such as by corrupting state estimation~\cite{guo2018worst} or degrading control performance~\cite{zhang2019optimal}, while remaining undetected.
Common attack strategies include denial of service (DoS) attacks, replay attacks, and false data injection (FDI) attacks. 
According to~\cite{teixeira2015secure}, these attacks can be characterized by three dimensions: model knowledge, disclosure resource, and disruption ability. 
For example, DoS attacks can degrade system performance by blocking communication channels and require neither system knowledge nor data disclosure, whereas FDI attacks typically rely on model information to generate stealthy attack signals. 
Another widely used classification groups attacks into deception attacks and DoS attacks.
Deception attacks involve tampering with or forging sensing data or control commands to mislead the system; both replay and FDI attacks fall under this category.
Extensive studies have explored various attack design methodologies~\cite{mo2009secure,zhang2015optimal,teixeira2015secure,guo2018worst,zhang2019optimal}. 

To secure CPSs against cyber attacks, research from the defensive perspective focuses on attack detection and identification~\cite{fang2020optimal,lu2024attack}, resilient state estimation~\cite{liu2023secure}, resilient controller design~\cite{hu2023observer}, etc. 
In this work, we investigate the detection of bias injection attacks (BIAs), a particular type of FDI attacks in which adversaries compromise a subset of sensors and slowly inject bias signals to mislead controllers or state estimators.
Although BIAs appear simple in form, they can gradually drive system states into unsafe regions and have been extensively studied in the context of power systems~\cite{wang2020detection,liu2024detection}. 
In~\cite{tanaka2024covert}, GPS spoofing was modeled as a zero-sum attacker–defender game, where BIAs emerged as the optimal attack strategy.
Moreover, BIAs can be strategically crafted by exploiting system dynamics and noise, and injected into sensors in a coordinated manner to achieve malicious objectives while remaining stealthy~\cite{teixeira2015secure}.
This renders the detection of BIAs under noisy conditions particularly challenging.

\subsection{Related work}
Attack detection methods in CPSs can be broadly classified into active and passive approaches, each tailored to specific attack types.
Specifically, active detection methods, such as moving target defense and watermarking, aim to proactively perturb system dynamics or inputs to expose malicious behaviors, and have been shown to be effective against replay attacks~\cite{liu2023proactive,naha2023quickest}.
Passive attack detection methods, mostly derived from residual generation-based fault diagnosis frameworks, are commonly used to detect FDI attacks (including BIAs). 
For instance, \cite{manandhar2014detection} and~\cite{murguia2019model} proposed Kalman filter-based FDI attack detection schemes that integrate statistical detectors using the $\chi^2$ test and the cumulative sum (CUSUM) test, respectively.
Observer-based methods have also been employed to estimate attack signals for detection purposes.
In~\cite{ahmed2024detection}, the authors developed an unknown input observer to estimate FDI attacks in vehicle platooning systems, while \cite{huang2022observer} designed an observer based on an augmented system to reconstruct BIA signals.
Under the assumption of bounded measurement noise, zonotopic set-membership methods have been adopted to detect bias attacks by checking the consistency between predicted and measured state sets~\cite{liu2021novel,li2023attack}.
However, the aforementioned methods generally treat FDI attacks as ordinary anomalies and do not explicitly account for stealthiness (or detectability) and the impact of attacks. 
As a result, while they may be effective in detecting conventional faults or abrupt anomalies, they can fail to detect bias-type attacks that are deliberately crafted to remain stealthy in the presence of noise.

This challenge highlights the importance of quantifying attack stealthiness in BIA detection strategy design. 
Regarding the characterization of stealthiness, Pasqualetti et al.~\cite{pasqualetti2013attack} showed, from a system-theoretic perspective, that an adversary can construct undetectable attacks by exciting only the zero dynamics of systems with invariant zeros.
This result provides a fundamental notion of stealthiness in noiseless systems. 
In stochastic systems, stealthiness can be reflected in changes in measurement statistics under the effects of both noise and attacks.
Early work analyzed attack stealthiness relative to specific detectors, such as bad-data detection~\cite{cui2012coordinated} and the~$\chi^2$ test~\cite{mo2013detecting}.
To obtain a detector-independent notion of stealthiness, the Kullback–Leibler divergence (KLD) was introduced in \cite{bai2017data} as a general measure of statistical distinguishability between attacked and attack-free systems, and has since been widely used in the analysis and design of stealthy FDI attacks in stochastic CPSs~\cite{guo2018worst,zhang2019optimal}.  

Note that attack stealthiness has been more extensively considered in the context of attack strategy design rather than detection. 
To our knowledge, existing approaches that explicitly incorporate attack stealthiness into detection strategy design mainly arise in the context of power systems, where structured attacks that directly manipulate state variables to bypass bad data detection mechanisms are considered. 
The corresponding detection methods are typically based on statistical analysis~\cite{chaojun2015detecting,deng2015defending}. However, the static modeling assumptions underlying these approaches limit their applicability to general stochastic dynamical systems.
On the other hand, some studies focus on detecting stealthy FDI attacks that lie in undetectable subspaces induced by system structure, such as zero-dynamics attacks~\cite{teixeira2012revealing}. 
Detecting this type of attacks requires active methods mentioned above, such as modifying system dynamics to break the structural conditions that enable stealthiness.

In contrast, relatively few studies have directly incorporated stealthiness of FDI attacks into residual generation-based detector design for stochastic systems, in which adversaries can deliberately exploit measurement noise to remain stealthy.
Since the KLD provides a detector-independent measure of stealthiness for stochastic systems, our previous work~\cite{tosun2025kullbackEJC} employed it to characterize worst-case attack detectability and designed detection observers against the stealthy BIAs. 
More recently,~\cite{feng2025false} embedded stealthiness constraints into attack detector design using Wasserstein ambiguity sets, offering another promising research direction. 
These efforts demonstrate the potential of stealth-aware detection frameworks.

\subsection{Main contributions}
Building on the above analysis, this paper investigates the detection of BIAs targeting a subset of sensors in linear time-invariant (LTI) systems driven by Gaussian noise. 
We propose an observer-based detection framework in which both attack stealthiness, quantified using KLD, and attack impact, characterized by a weighted norm of the attack vector, are explicitly incorporated into detector design.
The weighting matrix can be selected to reflect specific attack objectives.
The contributions of this paper are summarized as follows: 
\begin{itemize}
    \item {\bf Worst-case BIA detector design:} 
    We formulate the observer-based BIA detector design as a max–min optimization problem (as given in \eqref{eq: Pro_OptBIADet}) that maximizes the detectability of the worst-case BIAs, which are defined as attack vectors that attain the lowest detectability while ensuring a prescribed impact on the system by comprising only a subset of sensors.

    \item {\bf Multi-test-instant KLD-based detectability:} 
    To obtain tractable synthesis problems, we consider KLD-based detectability measures at several detection instants, including attack onset, one step after attack occurrence, and the steady-state scenario, providing guarantees on transient and long-term detection performance, respectively. 
    Notably, for the attack-onset case, we show that the Kalman filter is the optimal solution, regardless of the set of compromised sensors and the weighting matrix characterizing attack impact (Proposition~\ref{prop: kld0}). 

    \item {\bf Observer design via bi-convex approximation:}
    For the one-step and steady-state cases, the resulting observer design problems are inherently non-convex and computationally intractable. 
    We therefore reformulate them as bi-convex optimization problems in Theorem~\ref{thm: KLD1_BMI} and Theorem~\ref{thm: KLDinf BMI}, respectively. 
    Two numerical solution approaches, namely alternating optimization (AO) and the alternating direction method of multipliers (ADMM), are provided to solve the resulting problems.

    \item{\bf Tractable observer design via linear matrix inequality (LMI) approximation:} 
    The resulting observer design problems for one-step and steady-state cases are further approximated by LMI conditions, as presented in Theorem~\ref{thm: KLD1LMI} and Theorem~\ref{thm: KLDinf_LMI}. 
    This leads to computationally tractable observer synthesis procedures at the cost of increased conservatism.
\end{itemize}

This work constitutes an extension of our previous research~\cite{tosun2025kullbackEJC}. The main differences are summarized as follows:
(1) In~\cite{tosun2025kullbackEJC}, all sensors are assumed to be potentially compromised, whereas this paper introduces a structured matrix to characterize partially corrupted scenarios, which better reflects the practical situation as some secured sensors are harder to access;
(2) The attack impact in~\cite{tosun2025kullbackEJC} was characterized using a positive definite weighting matrix. In contrast, this paper relaxes this assumption and allows the weighting matrix to be positive semidefinite, rendering the methods in~\cite{tosun2025kullbackEJC} inapplicable under this more general setting;
(3) The previous work focuses on detector design at the attack onset and the steady-state regime.  
This paper further considers detector design one step after the attack occurs, thereby enhancing transient detection performance against stealthy BIAs.

The rest of the paper is organized as follows. 
The problem formulation is introduced in Section~\ref{sec:problem description}. Section~\ref{sec: main results} presents the design methods for the BIA detectors. 
Section~\ref{sec: algorithms} provides the algorithms and the residual evaluation approach.
The proposed approaches are applied to a thermal system in Section~\ref{sec:simulation} to illustrate their effectiveness.
Section~\ref{sec:conslusion} concludes the paper with future directions. 
To improve the flow of the paper and its accessibility, some technical proofs are relegated to Appendix~\ref{sec: proof}.

\paragraph{\bf Notation} Sets~$\N$,~$\R~(\R_+)$, and~$\R^n$  denote \mbox{non-negative} integers, (positive) reals, and the space of~$n$ dimensional real vectors, respectively. 
The identity matrix with an appropriate dimension is denoted by~$I$. 
For a matrix~$A$, $\rho(A)$ denotes its spectral radius, i.e., the largest amplitude of all eigenvalues of~$A$,
and $A^{\top}$ denotes the transpose of~$A$.
We use~$A \succ 0 ~(\succeq 0)$ to denote a positive (semi-)definite matrix.
The set of symmetric positive (semi-)definite matrices is denoted by~$\mathbb{S}^{n}_{\succ 0}$ (~$\mathbb{S}^{n}_{\succeq 0}$).
The symbol~$*$ is used to denote the off-diagonal elements in symmetric matrices to avoid clutter.
For a matrix pair~$(A,B)$, its generalized eigenvalues~$\sigma_i$ satisfy the equality~$A \nu = \sigma_i B \nu$ for non-zero vector~$\nu$.
The smallest generalized eigenvalue is denoted by~$\sigma_{\min}(A,B) = \min_{\nu \neq 0, \nu^{\top} B \nu >0} \frac{\nu^{\top} A \nu}{\nu^{\top} B \nu}$.
The quadratic norm of vector~$v$ with respect to~$W \succeq 0$ is defined as~$\|v\|^2_W = v^{\top} W v$.

\section{Model Description and Problem Statement}\label{sec:problem description}
This section first presents the system model considered in this work. The formulation of BIAs is then introduced, followed by their impact analysis, stealthiness characterization, and derivation of worst-case BIAs. 
Finally, the observer design problem against BIAs is formulated.

\subsection{System model}
Consider the following discrete-time LTI system 
\begin{equation}\label{eq:SS model}
    \left\{ \begin{array}l
         x(k+1) = Ax(k) + Bu(k) + B_{\omega} \omega(k)  \\
         y(k) = Cx(k)  + D_{\omega} \omega(k) + y_a(k),
    \end{array}
    \right.
\end{equation}
where~$x(k) \in \R^{n_x}$,~$u(k) \in \R^{n_u}$, and~$y(k) \in \R^{n_y}$ are the state, control input, and measurement output, respectively.
The noise term~$\omega(k) \in \R^{n_\omega}$ is assumed to be independent identically distributed (i.i.d) Gaussian white noise with zero mean, i.e.,~$\omega(k) \sim \mathcal{N}(0,I_{n_\omega})$.
The signal~$y_a$ denotes the sensor attack.
The matrices $A, B$, and $ C$ are known system matrices with appropriate dimensions, determined by the plant dynamics and structure. 
The matrices $B_{\omega}$ and $D_{\omega}$ characterize the effect of noise on the system.

\begin{assumption}\label{As: observability}
    The pair~$(A,C)$ is detectable and the pair~$(A,B_\omega)$ is stabilizable.
\end{assumption}

We employ an observer to monitor the status of the system
\begin{align}\label{eq: observer}
     \left\{ \begin{array}l
         \hat{x}(k+1) = A\hat{x}(k) + Bu(k) + L(y(k)-\hat{y}(k))  \\
         \hat{y}(k) = C\hat{x}(k) ,
    \end{array}
    \right.
\end{align}
where~$\hat{x}(k) \in \R^{n_x}$ and~$\hat{y}(k) \in \R^{n_y}$ are estimates of the state and output, respectively. 
The observer gain~$L \in \R^{n_x \times n_y}$ is the parameter to be determined.
Define the state estimation error as~$\tilde{x}(k) = x(k)-\hat{x}(k)$, whose dynamics are given by 
\begin{align}\label{eq: error_dynamics}
         \left\{ \begin{array}l
         \tilde{x}(k+1) = (A-LC)\tilde{x}(k) +(B_{\omega}-LD_{\omega})\omega(k) -L y_a(k)  \\
         r(k) = C\tilde{x}(k) + D_{\omega} \omega(k) + y_a(k),
    \end{array}
    \right.
\end{align}
where $r(k) = y(k)-\hat{y}(k)$ is the residual used to indicate the occurrences of anomalies in the system.
Ideally, in the absence of attacks, $r(k)$ fluctuates around zero because of noise, whereas it deviates significantly from zero when attacks or other anomalies occur. 

\subsection{BIA characterization}\label{Subsec: BIA description} 
In this subsection, we provide a detailed description of the attack signal~$y_a$. Consider the BIA model characterized by a step signal as follows
\begin{align}\label{eq: attack_description}
    y_a(k) = D_a a(k) 
    =\left\{ \begin{array}{ll}
        0,  & k < k_0,   \\
        D_a \bar{a}, & k \geq k_0, 
    \end{array} 
    \right.
\end{align}
where $\bar{a} \in \mathbb{R}^{n_a}$ is a constant attack vector, and $n_a$ represents the number of compromised sensors.
The matrix~$D_a \in \R^{n_y \times n_a}$ characterizes the set of unsecured sensors.
In particular, let $j_1< j_2 < \dots <j_{n_a}$~be the indices of sensors that are potentially compromised by the attacker. 
Then, the elements $(j_1,1),(j_2,2),\dots,(j_{n_a},n_a)$ of $D_a$ are set to one, while all other entries are zero.
We further assume that attacks happen at the time instant~$k_0=0$ and that the error dynamics~\eqref{eq: error_dynamics} have reached steady state by then.
This assumption is for notational convenience. According to Assumption~\ref{As: observability}, there always exists an $L$ to stabilize~\eqref{eq: error_dynamics}, such that the initial estimation error decays exponentially to zero.

\subsubsection{Impact of BIAs}
BIAs operate by injecting the constant bias~$\bar{a}$ into the system to degrade system performance while remaining as stealthy as possible. 
To quantify such an impact, we introduce the following weighted norm
\begin{align}\label{eq: Weighted_attack}
    \left\|D_a\bar{a}\right\|^2_{W} = \bar{a}^{\top} D^{\top}_a W D_a \bar{a},
\end{align}
where $W \succeq 0$ is a symmetric positive semidefinite matrix. 
It is worth noting that~\eqref{eq: Weighted_attack} provides a general framework to characterize the BIA impact. 
Specifically, the choice of the weighting matrix~$W$ can be adjusted to different objectives, such as increasing the mean square estimation error of system states~\cite{milovsevivc2017analysis}, deviating the steady state from the equilibrium point~\cite{teixeira2015secure}, or degrading the control performance of a linear quadratic Gaussian regulator~\cite{chen2017cyber}. 

To further clarify the selection of the weighting matrix $W$, we illustrate its construction through a brief example.  
Consider an output feedback controller~$u=Ky$ in~\eqref{eq:SS model}. 
In the worst-case scenario, the attacker possesses perfect knowledge of the system and seeks to shift the steady state of the system away from its equilibrium. 
From the closed-loop dynamics, the steady-state value~$x_{a,\infty}$ induced by the attack vector~$\bar{a}$ is derived as
\begin{align*}
    x_{a,\infty} = (I-(A+BKC))^{-1}BK D_a \bar{a} 
                 = G_{y_a x}  D_a \bar{a},
\end{align*}
where $G_{y_a x}$ is the steady-state gain of the transfer function from~$y_a$ to $x$.
In this setting, the weighting matrix in~\eqref{eq: Weighted_attack} is chosen as~$W = G^{\top}_{y_a x}G_{y_a x}$.

\subsubsection{Stealthiness characterization of BIA}
Given the stochastic nature of the residual induced by noise, we employ the KLD as the measure of attack stealthiness, which captures the discrepancy between the distributions of the attacked and attack-free residuals. 
The definition of KLD is given as follows. 

\begin{definition}[KLD~\cite{kullback1997information}]\label{def: KLD} 
Let $p$ and $q$ be two random vectors with probability density functions~$f_p(\cdot)$ and $f_q(\cdot)$, respectively.
Then, the KLD between $p$ and $q$ is 
\begin{align*}
    \mathcal{D}(p || q) = \int_{\{\zeta : f_p(\zeta)>0 \}} f_p(\zeta) \log \frac{f_p(\zeta)}{f_q(\zeta)} \diff \zeta .
\end{align*}
\end{definition}
Note that the KLD is non-negative and, in general, non-symmetric. 
A smaller KLD indicates greater similarity between the two distributions. 
It equals zero only for identical distributions, i.e.,~$\mathcal{D}(p || q) = 0 \Leftrightarrow f_p = f_q$. 

To compute the KLD between the distributions of the attacked and attack-free residuals, we analyze the statistical characteristics of the residuals in both cases.
First, define the attack-free residual signal as~$r_{\omega}$. 
Based on the error dynamics~\eqref{eq: error_dynamics} and the linear stochastic system theory, if $A-LC$ is Schur, the estimation error~$\tilde{x}$ in the steady state is subject to
~$\lim_{k \rightarrow \infty} \tilde{x}(k) \sim \mathcal{N}(0,\Sigma_{\tilde{x}})$.
The steady-state covariance matrix $\Sigma_{\tilde{x}}$ is the solution to the following discrete-time Lyapunov equation 
\begin{align}\label{eq: DT_Lyapunov}
    \Sigma_{\tilde{x}} = (A-LC)\Sigma_{\tilde{x}}(A-LC)^{\top} + (B_\omega - L D_\omega)  (B_\omega - L D_\omega)^{\top}.
\end{align}
Furthermore, the residual~$r_{\omega}$ follows
~$\lim_{k \rightarrow \infty} r_{\omega}(k) \sim \mathcal{N}(0,\Sigma_{r_{\omega}})$ 
with $\Sigma_{r_{\omega}} = C\Sigma_{\tilde{x}}C^{\top} + D_{\omega} D^{\top}_{\omega}$.

Second, let $r_a$ denote the residual in the presence of BIAs. 
By the superposition principle, $r_a$ comprises two components, i.e.,~$r_a = r_{\omega} + \bar{r}_{a}$, where~$r_{\omega}$ is the part induced by noise whose distribution has been specified above, and~$\bar{r}_{a}$ is the deterministic contribution from the attack~$\bar{a}$.
Recalling that the attack is assumed to start at~$k_0=0$ with the system initially in the steady state, the attacked component~$\bar{r}_a$ is derived as
\begin{align}\label{eq: attacked_r}
    \bar{r}_a(k) = \left(I- C\sum^{k-1}_{l=0} (A-LC)^{k-l-1} L \right) D_a \bar{a} 
                 = \Phi(k,L) D_a \bar{a}.
\end{align}
For $k=0$, $\Phi(0,L)$ is defined as the identity matrix~$I$.
As a result, the residual under attack at the $k$-th step follows a Gaussian distribution, i.e.,~$r_a(k) \sim \mathcal{N}(\bar{r}_a(k),\Sigma_{r_{\omega}})$.
Based on Definition~\ref{def: KLD}, the KLD between~$r_a$ and~$r_{\omega}$ that quantifies the stealthiness of $\bar{a}$ at time $k$ can be computed by
\begin{align}\label{eq: BIA_KLD}
\begin{split}
    \mathcal{D}(r_a(k) || r_{\omega}(k)) 
    &= \frac{1}{2} \bar{r}^{\top}_a(k) \Sigma_{r_{\omega}}^{-1}(L) \bar{r}_a(k)\\
    &= \frac{1}{2} \bar{a}^{\top} D^{\top}_a \Phi^{\top}(k,L) \Sigma_{r_{\omega}}^{-1}(L) \Phi(k,L) D_a \bar{a}.
\end{split}
\end{align}

While some studies adopt horizon-based KLD to characterize attack detectability~\cite{zhang2019optimal}, these approaches require long observation horizons and high-dimensional statistical modeling, which limits their suitability for detector synthesis. In contrast, this work adopts the instantaneous KLD in~\eqref{eq: BIA_KLD} as a measure of attack detectability. Under Gaussian assumptions, it is closely related to the Mahalanobis distance and classical $\chi^2$ tests (outlined in Section~\ref{sec: algorithms}), thereby providing a natural link to residual-based detection. Moreover, the single-instant consideration enables tractable detection observer design, as developed in the subsequent sections.

\subsubsection{Worst-case BIA design}
In this subsection, given the BIA impact characterized by the weighted norm in~\eqref{eq: Weighted_attack} and the stealthiness quantified by the KLD in~\eqref{eq: BIA_KLD}, we present a worst-case BIA design approach that minimizes the KLD between attack-free and attacked residuals while ensuring a prescribed level of attack impact.
Accordingly, the worst-case BIA vector denoted by~$\bar{a}^*$ is obtained as the solution to the following optimization problem:
\begin{align}\label{eq: Opt_a}
\begin{split}
   J_k(L) =  \min_{\bar{a} \in \R^{n_a}} ~&\frac{1}{2} \bar{a}^{\top} D^{\top}_a \Phi^{\top}(k,L) \Sigma_{r_{\omega}}^{-1}(L) \Phi(k,L) D_a \bar{a} \\
    \textup{s.t.} ~&\bar{a}^{\top} D^{\top}_a W D_a \bar{a} \geq \epsilon,
\end{split}
\end{align}
where $J_k(L)$ represents the smallest KLD value at time $k$ and $\epsilon \in \R_+$ is the lower bound on the BIA impact. 
Since both the objective and the impact constraint are homogeneous quadratic forms, we may restrict the search to the boundary, and an optimal solution can be chosen to satisfy the equality constraint, i.e.,~$\bar{a}^{\top} D^{\top}_a W D_a \bar{a} = \epsilon$. Without loss of generality, we further set~$\epsilon=1$.

Let us define~$\Psi(k,L)=\frac{1}{2} D^{\top}_a \Phi^{\top}(k,L) \Sigma_{r_{\omega}}^{-1}(L) \Phi(k,L) D_a$ and $\Gamma = D^{\top}_a W D_a$.
Note that for a fixed time instant~$k$ and a known observer gain~$L$, the worst-case BIA~$\bar{a}^*$ to the optimization problem~\eqref{eq: Opt_a} can be obtained by computing the smallest generalized eigenvalue of the matrix pair~$(\Psi(k,L),\Gamma)$~\cite{ghojogh2019eigenvalue}, as denoted by~$\sigma_{\min}(\Psi(k,L),\Gamma)$. 
In what follows, we outline the derivation of the worst-case BIA in two cases: when $\Gamma$ is positive definite and when it is positive semidefinite. 
\begin{enumerate}
    \item \textbf{Matrix $\Gamma$ is positive definite.} 
    The worst-case BIA $\bar{a}^*$ is given by
    \begin{align*}
        \bar{a}^* = \pm \frac{\sqrt{\epsilon}}{\| v_{\text{pd}}^* \|_{\Gamma}} v^*_{\text{pd}},
    \end{align*}
    where~$v_{\text{pd}}^*$ is the eigenvector corresponding to $\sigma_{\min}(\Psi(k,L),\Gamma)$.
    
    \item \textbf{Matrix $\Gamma$ is positive semidefinite.}  
    When~$\Gamma$ is singular, $\bar{a}^*$ must not lie in the null space of $\Gamma$, otherwise, the impact constraint in~\eqref{eq: Opt_a} is not satisfied. 
    Therefore, we first perform eigenvalue decomposition and obtain 
    \begin{align*}
        \Gamma = 
        \begin{bmatrix}
            U_r &U_n
        \end{bmatrix}
        \begin{bmatrix}
            \Lambda &0\\ 0 &0
        \end{bmatrix}
        \begin{bmatrix}
            U^{\top}_r \\ U^{\top}_n
        \end{bmatrix},
    \end{align*}
    where~$U_r$ and $U_n$ contain eigenvectors corresponding to non-zero and zero eigenvalues, respectively. 
    The diagonal matrix~$\Lambda$ contains the non-zero eigenvalues.
    Note that~$rank(\Gamma) = rank(U_r) = m < n_a$.  
    Then, by parameterizing~$\bar{a} = U_r b$ for~$b \in \R^m$, we obtain the reduced optimization problem:
    \begin{align*}
        J_k(L) =  \min_{b \in \R^m} ~b^{\top} U^{\top}_r \Psi(k,L) U_r b, \quad \textup{s.t.} ~b^{\top} \Lambda b \geq \epsilon.
    \end{align*}
    The optimal~$\bar{a}^*$ is then obtained by
    \begin{align*}
        \bar{a}^* = U_r b^* = \pm U_r \frac{\sqrt{\epsilon}}{\|v_{\text{psd}}^*\|_{\Lambda}} v_{\text{psd}}^*,
    \end{align*}
    where~$v_{\text{psd}}^*$ is the eigenvector corresponding to $\sigma_{\min}(U^{\top}_r \Psi(k,L) U_r,\Lambda)$.
\end{enumerate}
For further details on the generalized eigenvalue optimization problem and its application to BIA design, we refer readers to~\cite{milovsevivc2017analysis,teixeira2015secure}.

\subsection{Problem statement}
To enhance the detectability of worst-case BIAs, this study aims to maximize the smallest KLD value~$J_k(L)$ in~\eqref{eq: Opt_a} by optimizing the observer gain $L$. 
Note that $J_k(L)$ is time-varying, which leads to different objective functions at different time steps~$k$. 
Nonetheless, we focus on several specific instants because detection performance need not be guaranteed at every time step and $J_k(L)$ contains higher-order terms in~$L$ (for $1< k <\infty$), resulting in increased computational complexity. 
Specifically, (1) \textbf{transient} detection performance can be improved by considering the attack onset at~$k=0$ or one step after attack occurrence at $k=1$, and (2) \textbf{steady-state} detection performance can be enhanced by considering the limit case~$k \to \infty$.
Then, we formulate the problem addressed in this paper.

\begin{problem}[Optimal detection observer design against worst-case BIAs]\label{pro: Observer_design}
Consider the error dynamics~\eqref{eq: error_dynamics} and the worst-case BIA in~\eqref{eq: Opt_a}. 
For $k=0,~1,~\infty$, design the observer gain~$L$ through the following optimization problem 
    \begin{align}\label{eq: Pro_OptBIADet}
        \max_{L \in \R^{n_x \times n_y}} \{J_k(L): \rho(A-LC)<1 \} ,
    \end{align}
where $\rho(A-LC)<1$ is introduced to ensure the stability of the closed-loop estimator.   
\end{problem}

\begin{remark}[Non-convexity of the observer design problem]\label{rem: nonconvexity analysis}
    Although the time instant~$k$ has been specified, the resulting optimization problem~\eqref{eq: Pro_OptBIADet} still remains highly nonlinear and computationally intractable due to the following reasons: 
    (1) The objective function involves the inverse of the residual covariance matrix, i.e.,~$\Sigma^{-1}_{r_{\omega}}(L)$, where~$\Sigma_{r_{\omega}}(L)$ itself is a nonlinear function of $L$ according to the discrete-time Lyapunov equation~\eqref{eq: DT_Lyapunov};
    (2) For the cases $k = 1$ and $k \to \infty$, the composite structure $\Phi^{\top}(k,L) \Sigma_{r_{\omega}}^{-1}(L) \Phi(k,L)$ further complicates the objective function;
    (3) The stability constraint introduces additional nonlinearity into the problem. 
\end{remark}

\section{Main results}\label{sec: main results}
To address the intractability of the original optimization problem~\eqref{eq: Pro_OptBIADet}, we first introduce several lemmas in this section. 
Leveraging these results, the problem~\eqref{eq: Pro_OptBIADet} for $k=1$ and $k=\infty$ is then formulated into more tractable bi-convex and LMI forms.
Let us start by reformulating the optimization problem~\eqref{eq: Pro_OptBIADet} using the Lagrange dual in the following lemma.

\begin{lemma}[Reformulation of the observer design problem via Lagrange dual]\label{lem: reform pro}
    The optimization problem~\eqref{eq: Pro_OptBIADet} used for observer design can be reformulated as
    \begin{subequations}\label{eq: Reform_Pro_OptBIADet}
        \begin{align}
        \max_{L \in \R^{n_x \times n_y},~\lambda \in \R_+} ~&\lambda \notag\\
        \textup{s.t.} ~&\frac{1}{2} D^{\top}_a \Phi^{\top}(k,L) \Sigma_{r_{\omega}}^{-1}(L) \Phi(k,L) D_a - \lambda D^{\top}_a W D_a  \succeq 0, \label{eq: Reform_Pro_OptBIADet 1}\\
        & \rho(A-LC)<1. \label{eq: Reform_Pro_OptBIADet 2}
    \end{align}
    \end{subequations}
\end{lemma}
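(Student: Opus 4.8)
The plan is to establish the equivalence by analyzing the inner minimization that defines $J_k(L)$ through Lagrangian duality, and then merging the resulting inner maximization with the outer maximization over $L$. First I would fix a feasible $L$ (i.e. $\rho(A-LC)<1$) and recall from the worst-case BIA derivation that, with $\Psi(k,L)\succeq 0$ and $\Gamma = D_a^\top W D_a \succeq 0$, the quantity $J_k(L)$ is the optimal value of the single-constraint quadratic program $\min_{\bar a}\{\bar a^\top \Psi(k,L)\bar a : \bar a^\top \Gamma \bar a \ge 1\}$, where I have already substituted $\epsilon=1$.

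Next I would form the Lagrangian $\mathcal{L}(\bar a,\lambda)=\bar a^\top(\Psi(k,L)-\lambda\Gamma)\bar a + \lambda$ with multiplier $\lambda \ge 0$ and compute the dual function $g(\lambda)=\min_{\bar a}\mathcal{L}(\bar a,\lambda)$. Since the unconstrained minimum over $\bar a$ of a quadratic form $\bar a^\top M \bar a$ equals $0$ when $M\succeq 0$ and $-\infty$ otherwise, one obtains $g(\lambda)=\lambda$ precisely when $\Psi(k,L)-\lambda\Gamma \succeq 0$ and $g(\lambda)=-\infty$ elsewhere. The dual problem is therefore $\max\{\lambda \ge 0 : \Psi(k,L)-\lambda\Gamma \succeq 0\}$, whose objective and constraint are exactly those appearing in \eqref{eq: Reform_Pro_OptBIADet 1}.

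The crux is strong duality, namely that this dual optimum equals $J_k(L)$ despite the non-convexity of the feasible set $\{\bar a : \bar a^\top \Gamma \bar a \ge 1\}$. I would close the gap directly via a generalized-eigenvalue argument rather than a generic Slater condition, which is delicate when $\Gamma$ is only positive semidefinite. Because both $\Psi(k,L)$ and $\Gamma$ are positive semidefinite, the condition $\Psi(k,L)-\lambda\Gamma\succeq 0$ is equivalent to $\nu^\top \Psi(k,L)\nu \ge \lambda\,\nu^\top\Gamma\nu$ for every $\nu$; on the null space of $\Gamma$ this holds vacuously since $\Psi(k,L)\succeq 0$, while on $\{\nu:\nu^\top\Gamma\nu>0\}$ it reduces to $\lambda \le \nu^\top\Psi(k,L)\nu/\nu^\top\Gamma\nu$. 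The largest admissible $\lambda$ is thus $\sigma_{\min}(\Psi(k,L),\Gamma)$, which coincides with $J_k(L)$ by the generalized-eigenvalue solution recalled earlier. This simultaneously certifies the zero duality gap and handles the positive semidefinite case uniformly, which is the step I expect to require the most care.

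Finally I would substitute this inner identity into the outer problem, writing $\max_L\{J_k(L):\rho(A-LC)<1\}=\max_L\{\max\{\lambda\ge 0:\Psi(k,L)-\lambda\Gamma\succeq 0\}:\rho(A-LC)<1\}$, and then collapse the nested maxima into a single joint maximization over the pair $(L,\lambda)$, yielding \eqref{eq: Reform_Pro_OptBIADet}. The restriction $\lambda\in\R_+$ is without loss of generality because $\lambda=0$ is always feasible (as $\Psi(k,L)\succeq 0$), so the optimal multiplier is nonnegative; this completes the reformulation.
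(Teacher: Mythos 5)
Your proposal is correct and follows the same route as the paper's proof: form the Lagrangian of the inner problem \eqref{eq: Opt_a}, compute the dual function (which equals $\lambda$ exactly when $\Psi(k,L)-\lambda\Gamma\succeq 0$ and $-\infty$ otherwise), identify the dual with the semidefinite constraint in \eqref{eq: Reform_Pro_OptBIADet 1}, and then collapse the nested maximizations over $L$ and $\lambda$ into a joint one. The single point of divergence is how the zero duality gap is certified: the paper simply invokes the standard strong-duality result for a quadratic objective with one quadratic inequality constraint (citing Boyd and Vandenberghe), whereas you verify it directly by showing that the largest dual-feasible $\lambda$ equals $\sigma_{\min}(\Psi(k,L),\Gamma)$, which the paper has already identified with $J_k(L)$ in its worst-case BIA derivation. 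Your version is more self-contained and makes explicit why a merely positive semidefinite $\Gamma=D_a^{\top}WD_a$ causes no difficulty (on the null space of $\Gamma$ the inequality holds vacuously because $\Psi(k,L)\succeq 0$), a point the paper's citation leaves implicit; the price is that you implicitly rely on the infimum defining $\sigma_{\min}$ being attained and on $\Gamma\neq 0$ so that the primal is feasible, edge cases the paper does not treat either. Either justification is acceptable.
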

\begin{proof}
    The Lagrangian of~\eqref{eq: Opt_a} is first obtained as
    \begin{align*}
        \mathcal{L}(\bar{a},\lambda) = &\frac{1}{2} \bar{a}^{\top} D^{\top}_a \Phi^{\top}(k,L) \Sigma_{r_{\omega}}^{-1}(L) \Phi(k,L) D_a \bar{a} + \lambda (1-\bar{a}^{\top} D^{\top}_a W D_a \bar{a}),
    \end{align*}
    where~$\lambda \in \R_+$ is the Lagrange multiplier.
    The corresponding Lagrange dual function~$g(\lambda)$, which is the minimum value of the Lagrangian~$\mathcal{L}(\bar{a},\lambda)$ over~$\bar{a}$, is derived as
    \begin{align*}
        g(\lambda) = \min_{\bar{a} \in \R^{n_a}} \mathcal{L}(\bar{a},\lambda) 
        =\left\{\begin{array}{ll}
          \lambda,   & \frac{1}{2}  D^{\top}_a \Phi^{\top}(k,L) \Sigma_{r_{\omega}}^{-1}(L) \Phi(k,L)  D_a  \\ 
          &- \lambda  D^{\top}_a W D_a \succeq 0,\\
           -\infty,   & \textup{otherwise.} 
        \end{array} \right.
    \end{align*}
    Given the fact that strong duality holds for any optimization problem with a quadratic objective function and one quadratic inequality constraint~\cite[Chapter 5.2]{boyd2004convex}, the obtained Lagrange dual optimization problem shares an identical optimal value with that of~\eqref{eq: Opt_a}, i.e.,
    \begin{align*}
        J_k(L) = &\max_{\lambda \in \R_+} ~\lambda \notag\\
        &\textup{s.t.} ~\frac{1}{2}  D^{\top}_a \Phi^{\top}(k,L) \Sigma_{r_{\omega}}^{-1}(L) \Phi(k,L)  D_a - \lambda  D^{\top}_a W D_a \succeq 0.
    \end{align*}
    Taking the constraint~$\rho(A-LC)<1$ into account, we reformulate the max-min optimization problem~\eqref{eq: Pro_OptBIADet} into~\eqref{eq: Reform_Pro_OptBIADet} by jointly optimizing over~$L$ and~$\lambda$. This completes the proof.
\end{proof}

As discussed in Remark~\ref{rem: nonconvexity analysis}, the inverse of the covariance matrix~$\Sigma_{r_{\omega}}^{-1}(L)$ in constraint~\eqref{eq: Reform_Pro_OptBIADet 1} poses a great challenge in solving the optimization problem. 
To address this issue, we introduce the following lemma that provides a lower relaxation of~$\Sigma_{r_{\omega}}^{-1}(L)$ while ensuring the stability condition. 

\begin{lemma}[Relaxation of the inverse covariance matrix]\label{lem: Alt_InvMat}
    Consider the error dynamics~\eqref{eq: error_dynamics} and the inverse of the covariance matrix~$\Sigma_{r_{\omega}}^{-1}(L)$. 
    The following statements hold true:  
    \begin{enumerate}
        \item $\rho(A-LC)<1$,
        \item $\Sigma^{-1}_{r_\omega}(L) \succeq (C P^{-1} C^{\top} + D_{\omega} D_{\omega}^{\top} )^{-1} \succeq \tilde{Z}$,
    \end{enumerate}    
    if there exist symmetric positive definite matrices~$P$ and $\tilde{Z}$ such that the following inequalities are satisfied: 
    \begin{align}\label{eq: LMIs}
    \begin{bmatrix}
        P &P(A-LC) &P(B_{\omega} - L D_{\omega})\\
        * &P       &0\\
        * &*       &I
    \end{bmatrix} \succ 0, 
    \quad
    \begin{bmatrix}
        \tilde{Z} &\tilde{Z}C &\tilde{Z}D_{\omega}\\
        * &P &0\\
        * &* &I
    \end{bmatrix} \succ 0.
    \end{align} 
\end{lemma}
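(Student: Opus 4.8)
The plan is to obtain both conclusions directly from the two linear matrix inequalities in \eqref{eq: LMIs} via Schur complements, and then to chain the resulting matrix inequalities together. Throughout, abbreviate $A_{cl}\coloneqq A-LC$ and $B_{cl}\coloneqq B_\omega-LD_\omega$, and write $M\coloneqq CP^{-1}C^\top+D_\omega D_\omega^\top$ for the middle term appearing in statement~(2).

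First I would process the left LMI. Since its lower-right block $\mathrm{diag}(P,I)$ is positive definite, taking the Schur complement renders the LMI equivalent to $P-PA_{cl}P^{-1}A_{cl}^\top P-PB_{cl}B_{cl}^\top P\succ 0$; conjugating by $P^{-1}$ and setting $Q\coloneqq P^{-1}\succ 0$ gives the strict discrete-time Lyapunov inequality $A_{cl}QA_{cl}^\top+B_{cl}B_{cl}^\top\prec Q$. Dropping the positive semidefinite noise term leaves $A_{cl}QA_{cl}^\top\prec Q$, the standard Schur-stability certificate, which already yields claim~(1), $\rho(A-LC)<1$.

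Next I would turn the Lyapunov inequality into a covariance bound by comparing it with the Lyapunov equation~\eqref{eq: DT_Lyapunov}. Subtracting the equation for $\Sigma_{\tilde{x}}$ from the inequality for $Q$, the difference $\Delta\coloneqq Q-\Sigma_{\tilde{x}}$ obeys $\Delta\succ A_{cl}\Delta A_{cl}^\top$. Because $\rho(A_{cl})<1$ is now known, the discrete Lyapunov operator $X\mapsto X-A_{cl}X A_{cl}^\top$ has the summable positive inverse $\sum_{j\ge 0}A_{cl}^j(\cdot)(A_{cl}^\top)^j$; applying it to the positive definite residual of the inequality shows $\Delta\succeq 0$, i.e.\ $\Sigma_{\tilde{x}}\preceq P^{-1}$. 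Propagating through $C$ and adding $D_\omega D_\omega^\top$ then gives $\Sigma_{r_\omega}=C\Sigma_{\tilde{x}}C^\top+D_\omega D_\omega^\top\preceq M$. Since $\Sigma_{r_\omega}\succ 0$ (it is inverted throughout the paper), $M\succeq\Sigma_{r_\omega}\succ 0$ is invertible, and antitonicity of matrix inversion yields the first inequality $\Sigma_{r_\omega}^{-1}(L)\succeq M^{-1}$.

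For the second inequality I would treat the right LMI identically: its Schur complement with respect to $\mathrm{diag}(P,I)$ reads $\tilde{Z}-\tilde{Z}M\tilde{Z}\succ 0$, and conjugating by $\tilde{Z}^{-1}$ gives $\tilde{Z}^{-1}\succ M$; inverting once more (both sides positive definite) produces $M^{-1}\succeq\tilde{Z}$. Chaining the two bounds delivers statement~(2), $\Sigma_{r_\omega}^{-1}(L)\succeq M^{-1}\succeq\tilde{Z}$. The hard part will be the covariance-comparison step: it is the only place where stability is genuinely used, since it is the monotonicity (fixed-point) property of the discrete Lyapunov operator that upgrades the mere feasibility certificate $P$ into a true upper bound $\Sigma_{\tilde{x}}\preceq P^{-1}$ on the steady-state covariance. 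By contrast, the Schur-complement reductions and the order-reversal under inversion are routine once positive definiteness of $M$ has been secured.
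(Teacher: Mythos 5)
Your proposal is correct and follows essentially the same route as the paper's proof: Schur complements of the two LMIs, comparison of the resulting strict Lyapunov inequality with the covariance equation \eqref{eq: DT_Lyapunov} to get $\Sigma_{\tilde{x}} \preceq P^{-1}$, and antitonicity of matrix inversion to chain the bounds. The only cosmetic differences are that you certify stability by dropping the noise term from the full Schur complement rather than reading it off the leading $2\times 2$ principal submatrix, and you justify $P^{-1}-\Sigma_{\tilde{x}}\succeq 0$ via the explicit series inverse of the Lyapunov operator where the paper cites a textbook theorem.
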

\begin{proof}
    The first inequality in~\eqref{eq: LMIs} implies that 
    \begin{align*}
        \begin{bmatrix}
            P &P(A-LC) \\
            * &P
        \end{bmatrix} \succ 0,
    \end{align*}
    which is equivalent to~$P-(A-LC)^{\top}P(A-LC) \succ 0$ according to Schur complement.
    This ensures the stability of the error dynamics~\eqref{eq: error_dynamics}, i.e., $\rho(A-LC)<1$.

    For the second statement, we frst show that~$P^{-1} \succ \Sigma_{\tilde{x}}$ if the first inequality in~\eqref{eq: LMIs} holds.
    By utilizing the Schur complement, we have 
    \begin{align*}
    P - \begin{bmatrix}
        P(A-LC) &P(B_{\omega} - L D_{\omega})
    \end{bmatrix}
    \begin{bmatrix}
        P &0\\ * &I
    \end{bmatrix}^{-1}
    \begin{bmatrix}
        (A-LC)^{\top} P \\ (B_{\omega} - L D_{\omega})^{\top} P
    \end{bmatrix} \succ 0.
    \end{align*}
    Pre- and post-multiplying the above inequality by~$P^{-1}$ and reversing its direction lead to
    \begin{align*}
        (A-LC)P^{-1}(A-LC)^{\top} - P^{-1} + (B_{\omega} - L D_{\omega})(B_{\omega} - L D_{\omega})^{\top} \prec 0.
    \end{align*}
    Recall that the covariance matrix~$\Sigma_{\tilde{x}}$ is the solution to the discrete-time Lyapunov equation in~\eqref{eq: DT_Lyapunov}.
    It holds that~$(B_\omega - L D_\omega)  (B_\omega - L D_\omega)^{\top} =\Sigma_{\tilde{x}}-(A-LC)\Sigma_{\tilde{x}}(A-LC)^{\top}$.
    By substituting this relation into the above inequality, we have
    \begin{align*}
        (A-LC)(P^{-1}-\Sigma_{\tilde{x}})(A-LC)^{\top} - (P^{-1} - \Sigma_{\tilde{x}}) \prec 0.
    \end{align*}
    Since $A-LC$ is Schur, according to the discrete-time Lyapunov stability results in~\cite[Theorem 8.4]{hespanha2018linear},~$P^{-1}-\Sigma_{\tilde{x}}$ should be a symmetric positive definite matrix, i.e.,~$P^{-1} \succ \Sigma_{\tilde{x}}$, which further implies that  
    $\Sigma^{-1}_{r_\omega}(L) = (C \Sigma_{\tilde{x}} C^{\top} + D_{\omega} D_{\omega}^{\top})^{-1} \succeq (C P^{-1} C^{\top} + D_{\omega} D_{\omega}^{\top} )^{-1}$.
    
    To show~$(C P^{-1} C^{\top} + D_{\omega} D_{\omega}^{\top} )^{-1} \succeq \tilde{Z}$, by pre- and post-multiplying the second inequality in~\eqref{eq: LMIs} by~$\text{diag}(\tilde{Z}^{-1},I,I)$, the inequality is readily from the Schur complement, which is
    \begin{align*}
        \tilde{Z}^{-1} - \begin{bmatrix}
            C &D_\omega
        \end{bmatrix}
        \begin{bmatrix}
            P &0 \\ * &I
        \end{bmatrix}^{-1}
        \begin{bmatrix}
            C^{\top} \\ D^{\top}_{\omega}
        \end{bmatrix} 
        = \tilde{Z}^{-1} - (C P^{-1} C^{\top} + D_{\omega} D^{\top}_{\omega}) \succ 0.
    \end{align*}
    This completes the proof.
\end{proof}

The following lemma characterizes the monotonicity property of the smallest generalized eigenvalues, which is instrumental in establishing the optimality of the Kalman filter at the attack onset.

\begin{lemma}[Monotonicity of the smallest generalized eigenvalues]\label{lem: GEV}
    Let~$X_1,~X_2$, and~$Y$ be positive semidefinite matrices with~$X_1 \succeq X_2$. 
    The smallest generalized eigenvalues of the pairs~$(X_1,Y)$ and~$(X_2,Y)$ satisfy
    \begin{align*}
        \sigma_{\min}(X_1,Y) \geq \sigma_{\min}(X_2,Y).
    \end{align*}
\end{lemma}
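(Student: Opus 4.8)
The plan is to work directly from the variational (Rayleigh-quotient) characterization of the smallest generalized eigenvalue recalled in the notation, namely
\[
\sigma_{\min}(X,Y) = \min_{\nu \neq 0,\, \nu^{\top} Y \nu > 0} \frac{\nu^{\top} X \nu}{\nu^{\top} Y \nu}.
\]
The crucial structural observation is that the feasible set over which each minimization runs, $\{\nu \neq 0 : \nu^{\top} Y \nu > 0\}$, depends only on $Y$ and is therefore \emph{identical} for the two pairs $(X_1,Y)$ and $(X_2,Y)$. This is what allows the two minimization problems to be compared on a common domain, and it is why no positive-definiteness of $Y$ is required: restricting to $\nu^{\top} Y \nu > 0$ keeps every quotient well defined even when $Y$ is singular.

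First I would invoke the hypothesis $X_1 \succeq X_2$, which by definition of the positive semidefinite ordering gives the pointwise inequality $\nu^{\top} X_1 \nu \ge \nu^{\top} X_2 \nu$ for every $\nu$. Dividing by the strictly positive quantity $\nu^{\top} Y \nu$ on the common feasible set yields
\[
\frac{\nu^{\top} X_1 \nu}{\nu^{\top} Y \nu} \ge \frac{\nu^{\top} X_2 \nu}{\nu^{\top} Y \nu}
\]
for every feasible $\nu$; that is, the Rayleigh quotient associated with $X_1$ dominates that associated with $X_2$ everywhere on the domain.

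Finally I would pass from this pointwise inequality to the inequality of minima, taking care about the direction. Let $\nu^\star$ be a minimizer of the $X_1$-quotient, so that $\sigma_{\min}(X_1,Y)$ equals the value of that quotient at $\nu^\star$. Since $\nu^\star$ is feasible, evaluating the $X_2$-quotient at $\nu^\star$ can only be no smaller than its minimum, and combining this with the pointwise domination gives
\[
\sigma_{\min}(X_1,Y) = \frac{(\nu^\star)^{\top} X_1 \nu^\star}{(\nu^\star)^{\top} Y \nu^\star} \ge \frac{(\nu^\star)^{\top} X_2 \nu^\star}{(\nu^\star)^{\top} Y \nu^\star} \ge \sigma_{\min}(X_2,Y),
\]
which is the claim. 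The only subtlety, and the single point worth handling carefully, is this last step: one must evaluate the smaller ($X_2$) quotient at the minimizer of the larger ($X_1$) quotient, rather than the reverse, so that the chain of inequalities closes in the correct direction. Everything else is immediate from the pointwise order of quadratic forms.
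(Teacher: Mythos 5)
Your proof is correct and follows essentially the same route as the paper's: both arguments use the Rayleigh-quotient characterization of $\sigma_{\min}$, note that $X_1 \succeq X_2$ gives pointwise domination of the quotients over the common feasible set $\{\nu \neq 0 : \nu^{\top} Y \nu > 0\}$, and then pass to the minima. Your explicit handling of the last step (evaluating the $X_2$-quotient at the minimizer of the $X_1$-quotient) is just a more careful spelling-out of the paper's ``taking the minimum over all feasible $\nu$.''
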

\begin{proof}
    For any feasible~$\nu$, $X_1 \succeq X_2$ implies that $\nu^{\top}X_1\nu \succeq \nu^{\top}X_2\nu$, hence
    ~$\frac{\nu^{\top} X_1 \nu}{\nu^{\top} Y \nu} \geq  \frac{\nu^{\top} X_2 \nu}{\nu^{\top} Y \nu}$.
    Recall the expression of the smallest generalized given in notation. 
    Taking the minimum over all feasible~$\nu$ leads to $\sigma_{\min}(X_1,Y) \geq \sigma_{\min}(X_2,Y)$. 
    If $Y$ is singular, the minimization is taken over vectors $\nu$ satisfying~$\nu^{\top} Y \nu >0$.
    This completes the proof.
\end{proof}

We are now ready to present the main results of the paper.

\subsection{KLD-based observer design for attack-onset detectability}
We first consider the case~$k=0$ in~\eqref{eq: Pro_OptBIADet}, which aims to enhance the detectability of worst-case BIAs at the attack onset.
According to~\eqref{eq: attacked_r}, the contribution of BIAs to the residual at~$k=0$ is~$\bar{r}_a(0) = \Phi(0,L) D_a \bar{a}$, where~$\Phi(0,L)=I$ is the identity matrix.
Consequently, the reformulated optimization problem~\eqref{eq: Reform_Pro_OptBIADet} for observer design becomes
\begin{align}\label{eq: OptPro_KLD0}
\begin{split}
     J_0(L^*_0) = &\max_{L\in \R^{n_x \times n_y}, ~\lambda \in \R_+} ~\lambda \\
        \textup{s.t.} ~&~\frac{1}{2}  D^{\top}_a  \Sigma_{r_{\omega}}^{-1}(L)   D_a - \lambda  D^{\top}_a W D_a \succeq 0~\textup{and}~\eqref{eq: Reform_Pro_OptBIADet 2},
\end{split}
\end{align}
where $L^*_0$ is the optimal solution to~\eqref{eq: OptPro_KLD0} and~$J_0(L^*_0)$ denotes the optimal objective value.
The following proposition shows that the Kalman filter is optimal with respect to onset detectability.

\begin{proposition}[Optimality of the Kalman filter at attack onset]\label{prop: kld0}
    For the attack-onset case, i.e.,~$k=0$, the Kalman filter is the optimal solution to the reformulated optimization problem~\eqref{eq: OptPro_KLD0} for the worst-case BIA detection observer design, irrespective of the set of compromised sensors characterized by~$D_a$ and the impact matrix $W$.
\end{proposition}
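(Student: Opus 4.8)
The plan is to exploit the collapse $\Phi(0,L)=I$ at the onset, which reduces the objective to a generalized-eigenvalue problem depending on $L$ only through the residual covariance $\Sigma_{r_\omega}(L)$, and then to show that the Kalman gain dominates every other stabilizing gain in the Loewner order at the level of that covariance. Concretely, as established following \eqref{eq: Opt_a}, the value $J_0(L)$ equals $\sigma_{\min}(\Psi(0,L),\Gamma)$ with $\Psi(0,L)=\tfrac12 D_a^{\top}\Sigma_{r_\omega}^{-1}(L)D_a$ and $\Gamma=D_a^{\top}WD_a$. The first observation is that, since $\Gamma$ is fixed and positive semidefinite, Lemma~\ref{lem: GEV} makes $J_0(L)$ a Loewner-monotone functional of $\Psi(0,L)$: it suffices to produce a single gain that maximizes $\Psi(0,L)$ in the positive semidefinite order, and that gain will then maximize $J_0$ for \emph{every} choice of $D_a$ and $W$.

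Next I would trace how the partial order propagates along the chain $\Sigma_{\tilde x}(L)\mapsto\Sigma_{r_\omega}(L)\mapsto\Sigma_{r_\omega}^{-1}(L)\mapsto\Psi(0,L)$. The map $X\mapsto CXC^{\top}+D_\omega D_\omega^{\top}$ is Loewner-monotone, matrix inversion is Loewner-antitone on positive definite matrices, and $Z\mapsto D_a^{\top}ZD_a$ is again Loewner-monotone; hence $\Psi(0,L)$ is maximized in the PSD order precisely when the steady-state estimation-error covariance $\Sigma_{\tilde x}(L)$, the solution of the Lyapunov equation \eqref{eq: DT_Lyapunov}, is minimized in the PSD order.

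The crux is therefore to invoke the defining optimality of the Kalman filter: among all gains with $\rho(A-LC)<1$, the steady-state Kalman gain $L_{\mathrm{KF}}$ yields the Loewner-minimal error covariance, $\Sigma_{\tilde x}(L_{\mathrm{KF}})\preceq\Sigma_{\tilde x}(L)$. I expect this to be the main obstacle, both because it is the substantive control-theoretic input and because the noise model here drives $B_\omega\omega$ and $D_\omega\omega$ by the \emph{same} white noise, so process and measurement noises are correlated, with joint covariance $\bigl[\begin{smallmatrix}B_\omega\\ D_\omega\end{smallmatrix}\bigr]\bigl[\begin{smallmatrix}B_\omega\\ D_\omega\end{smallmatrix}\bigr]^{\top}$. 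The cleanest route is to note that, for any fixed a priori covariance $P$, the one-step update $L\mapsto(A-LC)P(A-LC)^{\top}+(B_\omega-LD_\omega)(B_\omega-LD_\omega)^{\top}$ is quadratic in $L$; completing the square rewrites it as $(L-L^{\star})R_e(L-L^{\star})^{\top}$ plus an $L$-independent term, where $R_e=CPC^{\top}+D_\omega D_\omega^{\top}\succ0$ and $L^{\star}=(APC^{\top}+B_\omega D_\omega^{\top})R_e^{-1}$ is the correlation-aware Kalman gain, so the minimum is attained pointwise in $P$. Monotonicity of this Riccati-type recursion in $P$ then transfers the pointwise inequality to the fixed points, giving the claimed ordering of steady-state covariances; alternatively one cites the standard minimum-variance property of the steady-state Kalman filter directly.

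Finally I would assemble the chain: from $\Sigma_{\tilde x}(L_{\mathrm{KF}})\preceq\Sigma_{\tilde x}(L)$ the monotonicities above give $\Psi(0,L_{\mathrm{KF}})\succeq\Psi(0,L)$, whence Lemma~\ref{lem: GEV} yields $J_0(L_{\mathrm{KF}})=\sigma_{\min}(\Psi(0,L_{\mathrm{KF}}),\Gamma)\ge\sigma_{\min}(\Psi(0,L),\Gamma)=J_0(L)$ for every stabilizing $L$, so $L_{\mathrm{KF}}$ solves \eqref{eq: OptPro_KLD0}. Because the covariance-minimization step never referenced $D_a$ or $W$ — these enter only through the fixed $\Gamma$ and the outer generalized eigenvalue, across which Lemma~\ref{lem: GEV} is monotone — the optimal $L_{\mathrm{KF}}$ is independent of the compromised-sensor set and of the impact weighting, which is exactly the ``irrespective of $D_a$ and $W$'' clause of the statement.
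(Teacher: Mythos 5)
Your proposal is correct and follows essentially the same route as the paper's proof: reduce $J_0(L)$ to $\sigma_{\min}\big(\tfrac12 D_a^{\top}\Sigma_{r_\omega}^{-1}(L)D_a,\,D_a^{\top}WD_a\big)$, invoke Lemma~\ref{lem: GEV} to reduce optimality to Loewner-maximizing $\Sigma_{r_\omega}^{-1}(L)$, and conclude from the minimum-variance property of the Kalman gain. Your additional care about the correlated process/measurement noise (the cross-term $B_\omega D_\omega^{\top}$ in the gain) and the explicit monotonicity chain are details the paper's one-line appeal to the Kalman filter's optimality leaves implicit, but the argument is the same.
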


\begin{proof}
    Define the feasible set of~\eqref{eq: Reform_Pro_OptBIADet 2} as $\mathbb{L}:=\{L \in \R^{n_x \times n_y} :\rho(A-LC)<1\}$. 
    According to Lemma~\ref{lem: reform pro}, for any~$L \in \mathbb{L}$, the optimization problem~\eqref{eq: OptPro_KLD0} is the dual of~\eqref{eq: Opt_a} at~$k=0$ with zero duality gap, where the optimal value of~\eqref{eq: Opt_a} equals $\sigma_{\min}(\frac{1}{2} D^{\top}_a  \Sigma_{r_{\omega}}^{-1}(L) D_a, D^{\top}_a W D_a)$.
    Therefore, \eqref{eq: OptPro_KLD0} can be interpreted as finding~$L \in \mathbb{L}$ that maximizes~$\sigma_{\min}(\frac{1}{2} D^{\top}_a  \Sigma_{r_{\omega}}^{-1}(L) D_a, D^{\top}_a W D_a)$.
    According to Lemma~\ref{lem: GEV}, this is equivalent to finding $L$ that maximizes~$\frac{1}{2} D^{\top}_a  \Sigma_{r_{\omega}}^{-1}(L) D_a$.
    Since the Kalman filter gain, which is denoted as~$L_{kal}$, minimizes the residual covariance matrix~$\Sigma_{r_{\omega}}(L)$, it holds that 
    $\frac{1}{2} D^{\top}_a  \Sigma_{r_{\omega}}^{-1}(L_{kal}) D_a \succeq \frac{1}{2} D^{\top}_a  \Sigma_{r_{\omega}}^{-1}(L) D_a$ for all~$L \in \mathbb{L}$.
    As a result, the Kalman filter is the optimal solution to~\eqref{eq: OptPro_KLD0} regardless of~$D_a$ and $W$. This completes the proof.
\end{proof}

\subsection{KLD-based observer design for one-step detectability}\label{subsec: KLD1}
Setting $k=1$ in the observer design problem~\eqref{eq: Pro_OptBIADet} can enhance the detectability of worst-case BIAs one step after their occurrence.  
According to~\eqref{eq: attacked_r}, the contribution of BIAs to the residual at~$k=1$ becomes~$\bar{r}_a(1) = (I-CL) D_a \bar{a}$, where~$\Phi(1,L)=I-CL$.
Building on Lemma~\ref{lem: reform pro}, the observer design problem for the one-step case becomes
\begin{align}\label{eq: OptPro_KLD1}
    J_1(L^*_1) &= \max_{L\in \R^{n_x \times n_y}, ~\lambda \in \R_+} ~\lambda \notag\\
        \textup{s.t.} ~&\frac{1}{2}  D^{\top}_a (I-CL)^{\top} \Sigma_{r_{\omega}}^{-1}(L) (I-CL)  D_a - \lambda  D^{\top}_a W D_a \succeq 0 
        ~\textup{and}~\eqref{eq: Reform_Pro_OptBIADet 2},
\end{align}
where $L_1^*$ is the optimal solution to~\eqref{eq: OptPro_KLD1} and $J_1(L^*_1)$ is the corresponding optimal objective value.
The matrix inequality constraint in~\eqref{eq: OptPro_KLD1} is complicated due to the presence of the composite term~$D^{\top}_a (I-CL)^{\top} \Sigma_{r_{\omega}}^{-1}(L) (I-CL)  D_a$. 
To address this issue, the following theorem provides a bi-convex approximation of~\eqref{eq: OptPro_KLD1} producing a feasible, but possibly suboptimal, solution.

\begin{theorem}[Bi-convex approximation for one-step detectability]\label{thm: KLD1_BMI}
    Consider the error dynamics~\eqref{eq: error_dynamics} in the presence of BIAs. 
    The reformulated optimization problem~\eqref{eq: OptPro_KLD1} of the detection observer design problem~\eqref{eq: Pro_OptBIADet} at~$k=1$ can be approximated through the following bi-convex optimization problem
    \begin{align}\label{eq: BMIs_KLD1}
    \max ~&\lambda \notag\\
     \textup{s.t.} ~&L\in \R^{n_x \times n_y}, ~\lambda \in \R_+, P \in \mathbb{S}^{n_x}_{\succ 0}, ~\tilde{Z} \in \mathbb{S}^{n_y}_{\succ 0}, ~Y \in \mathbb{R}^{n_y \times n_a}, ~\eqref{eq: LMIs},  \notag  \\
          &\begin{bmatrix}
         \varphi(Y,L) - \lambda D_a^{\top}WD_a &Y^{\top} \\
         Y &\frac{1}{2} \tilde{Z}
    \end{bmatrix} \succeq 0, 
    \end{align}
    where $\varphi(Y,L) = Y^{\top}(I-CL)D_a + D_a^{\top}(I-CL)^{\top}Y$.
    The optimal objective value $\lambda^*$ satisfies~$\lambda^* < J_1(L_1^*)$.
\end{theorem}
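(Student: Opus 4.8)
The plan is to obtain the bi-convex program as a conservative inner approximation of \eqref{eq: OptPro_KLD1}, so that every feasible tuple of \eqref{eq: BMIs_KLD1} is feasible for \eqref{eq: OptPro_KLD1} with the same value of $\lambda$. Abbreviating $M \Let (I-CL)D_a$, the constraint in \eqref{eq: OptPro_KLD1} reads $\frac12 M^\top\Sigma_{r_\omega}^{-1}(L)M \succeq \lambda D_a^\top W D_a$. First I would apply Lemma~\ref{lem: Alt_InvMat}: under the two LMIs \eqref{eq: LMIs} there exist $P\succ0$ and $\tilde Z\succ0$ with $\Sigma_{r_\omega}^{-1}(L)\succeq\tilde Z$ and, simultaneously, $\rho(A-LC)<1$. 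Replacing $\Sigma_{r_\omega}^{-1}(L)$ by its lower bound $\tilde Z$ shows that $\frac12 M^\top\tilde Z M \succeq \lambda D_a^\top W D_a$ is sufficient for the original matrix inequality, while \eqref{eq: LMIs} already discharges the stability requirement \eqref{eq: Reform_Pro_OptBIADet 2}.

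The crux—and the step I expect to be the main obstacle—is that $\frac12 M^\top\tilde Z M$ is still quadratic in $L$ through $M$, so it is not yet an LMI. Here I would linearize it by introducing the slack variable $Y\in\R^{n_y\times n_a}$ and using the completion-of-squares identity, valid because $\tilde Z\succ0$,
\begin{align*}
\Big(\tfrac{1}{\sqrt2}M-\sqrt2\,\tilde Z^{-1}Y\Big)^{\!\top}\tilde Z\Big(\tfrac{1}{\sqrt2}M-\sqrt2\,\tilde Z^{-1}Y\Big)
= \tfrac12 M^\top\tilde Z M - \varphi(Y,L) + 2Y^\top\tilde Z^{-1}Y \succeq 0,
\end{align*}
which gives the affine-in-$L$ lower bound $\frac12 M^\top\tilde Z M \succeq \varphi(Y,L)-2Y^\top\tilde Z^{-1}Y$ with $\varphi(Y,L)=Y^\top M+M^\top Y$. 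It therefore suffices to impose $\varphi(Y,L)-2Y^\top\tilde Z^{-1}Y-\lambda D_a^\top W D_a\succeq0$; a Schur complement with respect to the block $\frac12\tilde Z\succ0$ turns this into precisely the $2\times2$ block LMI in \eqref{eq: BMIs_KLD1}. Chaining the two relaxations, any feasible $(L,\lambda,P,\tilde Z,Y)$ of \eqref{eq: BMIs_KLD1} satisfies $\frac12 M^\top\Sigma_{r_\omega}^{-1}(L)M\succeq\frac12 M^\top\tilde Z M\succeq\lambda D_a^\top W D_a$ together with $\rho(A-LC)<1$, hence is feasible for \eqref{eq: OptPro_KLD1}; taking suprema gives $\lambda^*\le J_1(L_1^*)$. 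Bi-convexity is then read off directly, since the constraints are LMIs in $(\lambda,P,\tilde Z,Y)$ for fixed $L$ and LMIs in $(\lambda,L)$ for fixed $(P,\tilde Z,Y)$.

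For the \emph{strict} inequality I would argue by contradiction: suppose some feasible tuple attains $\lambda=J_1(L_1^*)$. Its $L$ is then optimal for \eqref{eq: OptPro_KLD1}, so the original constraint is active along a generalized eigenvector $\nu$ with $\nu^\top D_a^\top W D_a\nu>0$ realizing $\sigma_{\min}$, i.e.\ $\nu^\top\big(\tfrac12 M^\top\Sigma_{r_\omega}^{-1}(L)M\big)\nu=J_1(L_1^*)\,\nu^\top D_a^\top W D_a\nu$. Feasibility also forces $\frac12 M^\top\tilde Z M\succeq J_1(L_1^*)D_a^\top W D_a$, and subtracting the two relations after testing against $\nu$ yields $\nu^\top M^\top(\tilde Z-\Sigma_{r_\omega}^{-1}(L))M\nu\ge0$. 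However, since the second LMI in \eqref{eq: LMIs} is strict, Lemma~\ref{lem: Alt_InvMat} in fact gives $\tilde Z\prec(CP^{-1}C^\top+D_\omega D_\omega^\top)^{-1}\preceq\Sigma_{r_\omega}^{-1}(L)$, so $\Sigma_{r_\omega}^{-1}(L)-\tilde Z\succ0$ and the quadratic form above is $\le0$ with equality only if $M\nu=0$. Thus $M\nu=0$, forcing $J_1(L_1^*)\,\nu^\top D_a^\top W D_a\nu=0$, which contradicts $\nu^\top D_a^\top W D_a\nu>0$ in the non-degenerate case $J_1(L_1^*)>0$. Hence $\lambda^*<J_1(L_1^*)$. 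The only delicate bookkeeping is making sure the strict gap $\Sigma_{r_\omega}^{-1}(L)\succ\tilde Z$—which stems from the strict second LMI—propagates through the congruence by the possibly rank-deficient $M$, which is exactly what the active-direction argument secures.
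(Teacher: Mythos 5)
Your argument is correct and is essentially the paper's proof: the paper applies Young's relation (completion of squares) directly with $\Sigma_{r_\omega}(L)$ to get $\tfrac12 D_a^\top(I-CL)^\top\Sigma_{r_\omega}^{-1}(L)(I-CL)D_a \succeq \varphi(Y,L)-2Y^\top\Sigma_{r_\omega}(L)Y$ and then bounds $\Sigma_{r_\omega}(L)\preceq\tilde Z^{-1}$ via Lemma~\ref{lem: Alt_InvMat}, whereas you first replace $\Sigma_{r_\omega}^{-1}(L)$ by $\tilde Z$ and then complete the square with $\tilde Z$ — the two orderings yield the identical intermediate inequality and the same Schur-complement step. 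Your additional contradiction argument for the strict inequality $\lambda^*<J_1(L_1^*)$ goes beyond the paper, which only establishes that $\lambda^*$ is a lower bound; it is a reasonable sketch, though note it only excludes a feasible point \emph{attaining} the value $J_1(L_1^*)$ and does not by itself rule out the supremum equalling it.
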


\begin{proof}
    Since~$\frac{1}{2}((I-CL)D_a - 2\Sigma_{r_\omega}(L) Y)^{\top}\Sigma^{-1}_{r_{\omega}}(L)((I-CL)D_a - 2\Sigma_{r_{\omega}}(L)Y) \succeq 0$, it holds that
    \begin{align}\label{eq: KLD1_BMI_proof 1}
        \frac{1}{2}D^{\top}_a (I-CL)^{\top}\Sigma^{-1}_{r_{\omega}}(L) (I-CL) D_a 
        \succeq  \varphi(Y,L) - 2 Y^{\top} \Sigma_{r_{\omega}}(L) Y.
    \end{align}
    According to Lemma~\ref{lem: Alt_InvMat}, inequalities~\eqref{eq: LMIs} ensure that $\rho(A-LC)<1$ and  $\Sigma_{r_{\omega}}^{-1}(L) \succ \tilde{Z}$. Together with inequality~\eqref{eq: KLD1_BMI_proof 1}, we have
    \begin{align}\label{eq: KLD1_BMI_proof 2}
        \frac{1}{2}D^{\top}_a (I-CL)^{\top}\Sigma^{-1}_{r_{\omega}}(L) (I-CL) D_a 
        \succeq  \varphi(Y,L) - 2  Y^{\top} \tilde{Z}^{-1} Y.
    \end{align}
    Then, applying the Schur complement to~\eqref{eq: BMIs_KLD1} leads to
    \begin{align*}
        \varphi(Y,L) - 2Y^{\top} \tilde{Z}^{-1} Y \succeq  \lambda D_a^{\top}WD_a.
    \end{align*}
    By substituting~\eqref{eq: KLD1_BMI_proof 2} into the above inequality, we have
    \begin{align*}
        \frac{1}{2}  D^{\top}_a (I-CL)^{\top} \Sigma_{r_{\omega}}^{-1}(L) (I-CL)  D_a - \lambda  D^{\top}_a W D_a \succeq 0.
    \end{align*}
    This establishes the sufficiency of the derived conditions in~\eqref{eq: BMIs_KLD1}, and thus the optimal objective value $\lambda^*$ provides a lower bound on that of~\eqref{eq: OptPro_KLD1}. 
    This completes the proof.  
\end{proof}

The optimization problem~\eqref{eq: BMIs_KLD1} contains bi-linear terms~$PL$ and~$Y^{\top}CL$ in its constraints, which are nonlinear but can be solved efficiently using well-established AO and ADMM algorithms.
The implementation of the two algorithms for solving~\eqref{eq: BMIs_KLD1} will be outlined in Subsection~\ref{subsec: AO and ADMM}.
Alternatively, the observer design problem~\eqref{eq: OptPro_KLD1} can be further approximated by more tractable LMIs, albeit at the cost of increased conservatism in the solutions. 
To this end, the impact matrix~$W$ is factorized as~$W=R^{\top}_W R_W$.
The result is presented in the following theorem.

\begin{theorem}[LMI-based approximation for one-step detectability]\label{thm: KLD1LMI}
    Consider the error dynamics~\eqref{eq: error_dynamics} in the presence of BIAs. 
    Given a positive scalar~$\gamma \in \R_+$, the reformulated optimization problem~\eqref{eq: OptPro_KLD1} for detection observer design at~$k=1$, can be approximated through the following LMI-based optimization problem
    \begin{subequations}\label{eq: LMIs_KLD1}
         \begin{align}
            \min ~&\mu  \notag \\
            \textup{s.t.}   ~&\mu \in \mathbb{R}_+, ~P \in \mathbb{S}^{n_x}_{\succ 0}, ~\tilde{Z} \in \mathbb{S}^{n_y}_{\succ 0}, ~G \in \mathbb{R}^{n_x \times n_y}, ~Y \in \mathbb{R}^{n_y \times n_a}, \notag \\
            &\begin{bmatrix}
                P &PA-GC &PB_{\omega}-GD_{\omega}\\
                * &P       &0\\
                * &*       &I
            \end{bmatrix} \succ 0,
            ~\begin{bmatrix}
                \tilde{Z} &\tilde{Z}C &\tilde{Z}D_{\omega}\\
                * &P &0\\
                * &0 &I
            \end{bmatrix} \succ 0, \label{eq: LMIs_KLD1 c1}\\
            &\begin{bmatrix}
                \begin{bmatrix}
                Y^{\top}D_a + D^{\top}_a Y   &D^{\top}_a R^{\top}_W \\
                R_W D_a &2 \mu I
                    \end{bmatrix} 
                &\begin{bmatrix}
                    \Gamma_1 &0\\
                    0        &\mu I
                \end{bmatrix} \\
            * &\begin{bmatrix}
                \Gamma_2 &0 \\
             0 &\mu I
            \end{bmatrix}
            \end{bmatrix}\succeq 0, \label{eq: LMIs_KLD1 c2}
        \end{align}
    \end{subequations} 
    and the involved matrices are
    \begin{align*}
    \Gamma_1 = 
        \begin{bmatrix}
            Y^{\top}C  &D^{\top}_a G^{\top} &Y^{\top}D_w 
        \end{bmatrix}, 
    ~\Gamma_2 = \begin{bmatrix}
            (2+1/\gamma)^{-1} P &0 &0\\
            *  &1/\gamma P &0 \\
            *  &* &1/2 I
            \end{bmatrix},
    \end{align*}
    where $\mu = 1/\lambda$, $W=R^{\top}_W R_W$ and the observer gain matrix~$L = P^{-1}G$.
\end{theorem}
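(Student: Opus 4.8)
The plan is to exhibit \eqref{eq: LMIs_KLD1} as a chain of \emph{sufficient} conditions for the matrix inequality constraint in \eqref{eq: OptPro_KLD1}, so that every feasible point of the LMI yields a feasible point of \eqref{eq: OptPro_KLD1} with $\lambda = 1/\mu$; since maximizing $\lambda$ is the same as minimizing $\mu=1/\lambda$ over $\lambda>0$, the optimal value $1/\mu^*$ then lower-bounds $J_1(L_1^*)$, which is exactly the sense in which \eqref{eq: LMIs_KLD1} \emph{approximates} \eqref{eq: OptPro_KLD1}. The three ingredients of the chain are: the completion-of-squares bound \eqref{eq: KLD1_BMI_proof 1} already established in the proof of Theorem~\ref{thm: KLD1_BMI}; the covariance relaxation supplied by Lemma~\ref{lem: Alt_InvMat}; and a weighted Young (completion-of-squares) inequality that linearizes the remaining bilinear cross-term. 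The substitutions $G=PL$, $\mu=1/\lambda$ and $W=R_W^\top R_W$ are what convert the resulting matrix inequality into the genuine LMI \eqref{eq: LMIs_KLD1 c2}.

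I would first fix the two semidefinite bounds. Starting from \eqref{eq: KLD1_BMI_proof 1}, namely $\tfrac12 D_a^\top(I-CL)^\top\Sigma_{r_\omega}^{-1}(L)(I-CL)D_a \succeq \varphi(Y,L)-2Y^\top\Sigma_{r_\omega}(L)Y$, the point is to bound $-2Y^\top\Sigma_{r_\omega}(L)Y$ from \emph{below}. Under $G=PL$ the two matrix inequalities in \eqref{eq: LMIs_KLD1 c1} coincide with \eqref{eq: LMIs}, so Lemma~\ref{lem: Alt_InvMat} gives $\rho(A-LC)<1$ together with $\Sigma_{r_\omega}^{-1}(L)\succeq(CP^{-1}C^\top+D_\omega D_\omega^\top)^{-1}$; inverting the latter yields $\Sigma_{r_\omega}(L)\preceq CP^{-1}C^\top+D_\omega D_\omega^\top$ and hence $-2Y^\top\Sigma_{r_\omega}(L)Y\succeq -2Y^\top CP^{-1}C^\top Y-2Y^\top D_\omega D_\omega^\top Y$. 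It is essential to route through $\Sigma_{r_\omega}(L)\preceq CP^{-1}C^\top+D_\omega D_\omega^\top$ directly rather than through $\tilde Z^{-1}$, since $\tilde Z^{-1}\succeq CP^{-1}C^\top+D_\omega D_\omega^\top$ would push the inequality in the wrong direction. Expanding $\varphi(Y,L)=Y^\top D_a+D_a^\top Y-(Y^\top CL D_a+D_a^\top L^\top C^\top Y)$ and applying the weighted Young inequality $U^\top V+V^\top U\preceq \gamma V^\top P V+\tfrac1\gamma U^\top P^{-1}U$ with $U=C^\top Y$, $V=LD_a$ bounds the cross-term as $Y^\top CLD_a+D_a^\top L^\top C^\top Y\preceq \gamma D_a^\top L^\top PL D_a+\tfrac1\gamma Y^\top CP^{-1}C^\top Y$.

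Collecting these gives the single sufficient matrix inequality
\begin{align*}
    Y^\top D_a+D_a^\top Y-\Big(2+\tfrac1\gamma\Big)Y^\top CP^{-1}C^\top Y-\gamma D_a^\top L^\top PL D_a-2Y^\top D_\omega D_\omega^\top Y-\lambda D_a^\top W D_a\succeq 0,
\end{align*}
in which the coefficient $2+1/\gamma$ arises precisely because the $Y^\top CP^{-1}C^\top Y$ term receives weight $2$ from the covariance bound and $1/\gamma$ from Young's inequality. The final step is to recognize this as a nested Schur-complement unfolding of \eqref{eq: LMIs_KLD1 c2}. Using $G=PL$ so that $G^\top P^{-1}G=L^\top PL$, the three quadratic terms assemble into $\Gamma_1\Gamma_2^{-1}\Gamma_1^\top$ with $\Gamma_1,\Gamma_2$ exactly as defined (the diagonal blocks $(2+1/\gamma)^{-1}P$, $\tfrac1\gamma P$, $\tfrac12 I$ of $\Gamma_2$ invert to the weights $(2+1/\gamma)P^{-1}$, $\gamma P^{-1}$, $2I$), while $\lambda D_a^\top W D_a=\tfrac1\mu D_a^\top R_W^\top R_W D_a$ is carried by the $(R_W D_a,\,2\mu I)$ block; two successive Schur complements (legitimate since $\Gamma_2\succ0$ and $\mu>0$) then reproduce the block LMI and close the argument. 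Converting $\max\lambda$ into $\min\mu$ matches the objectives.

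The main obstacle is the sign/direction bookkeeping across the whole chain: every relaxation must lower-bound the KLD-type left-hand side, which forces the specific choices above (the direct $\Sigma_{r_\omega}(L)$ bound rather than $\tilde Z^{-1}$, and the particular $U,V$ split in Young's inequality), and it is also what pins down the coefficient $2+1/\gamma$ that must match $\Gamma_2$. A secondary point to verify carefully is that the Schur unfolding is \emph{exact} in both directions given $\Gamma_2\succ0$ and $\mu>0$, so that feasibility of \eqref{eq: LMIs_KLD1 c2} is genuinely equivalent to the displayed matrix inequality; the conservatism of the approximation then resides entirely in the two relaxations, which is why $1/\mu^*$ need not equal $J_1(L_1^*)$.
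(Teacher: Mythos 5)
Your proposal is correct and follows essentially the same route as the paper's proof: the completion-of-squares bound \eqref{eq: KLD1_BMI_proof 1}, the covariance relaxation $\Sigma_{r_\omega}(L)\preceq CP^{-1}C^\top+D_\omega D_\omega^\top$ from Lemma~\ref{lem: Alt_InvMat}, the generalized Young's relation on the cross-term (which after $G=PL$ is identical to the paper's application with $X=C^\top Y$, $Y=GD_a$), and two Schur complements assembling $\Gamma_1\Gamma_2^{-1}\Gamma_1^\top$. The only quibble is your side remark that the $\tilde Z^{-1}$ route would go "in the wrong direction" --- since $\tilde Z^{-1}\succeq\Sigma_{r_\omega}(L)$ it would still give a valid (merely looser) lower bound on $-2Y^\top\Sigma_{r_\omega}(L)Y$; the real reason to use the $P$-based bound is that it matches the $\Gamma_2$ structure --- but this does not affect the correctness of your argument.
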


\begin{proof}
The proof is relegated to Section~\ref{sec: proof}.
\end{proof}

\subsection{KLD-based observer design for steady-state detectability}\label{subsec: KLDinf}
To enhance the steady-state detectability of worst-case BIAs, we consider $k \to \infty$ in the optimization problem~\eqref{eq: Reform_Pro_OptBIADet} and obtain:
\begin{align}\label{eq: OptPro_KLD_inf}
     J_{\infty}(L^*_{\infty}) &= \max_{L \in \mathbb{R}^{n_x \times n_y}, ~\lambda \in \R_+} ~\lambda \notag\\
        \textup{s.t.} ~&\frac{1}{2} D^{\top}_a \Phi(\infty,L)^{\top} \Sigma_{r_{\omega}}(L)^{-1} \Phi(\infty,L) D_a - \lambda D^{\top}_a W D_a \succeq 0
        ~\textup{and}~\eqref{eq: Reform_Pro_OptBIADet 2},
\end{align}
where $\Phi(\infty,L) = I-C(I-A + LC)^{-1} L$, $L_{\infty}^*$ is the optimal solution to~\eqref{eq: OptPro_KLD_inf} with $J_{\infty}(L^*_{\infty})$ being the corresponding optimal objective value.
To deal with the non-tractable term~$\Phi(\infty,L)$ in the constraint, we derive an alternative expression of~$\Phi(\infty,L)$ in the following lemma.

\begin{lemma}[Transformation of the steady-state transition matrix]\label{lem: Alternative trans matrix}
    Assume that the system matrix $A$ has no unit eigenvalues and $A-LC$ is Schur. 
    The steady-state transition matrix $\Phi(\infty,L)$ in~\eqref{eq: attacked_r} can be written as~
    \begin{align}\label{eq: alternative expression}
        \Phi(\infty,L) = (I+ML)^{-1},
    \end{align}
    where~$M = C(I-A)^{-1}$.
\end{lemma}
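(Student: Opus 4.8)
The plan is to treat the statement as a pure matrix identity, taking as the starting point the explicit expression $\Phi(\infty,L) = I - C(I-A+LC)^{-1}L$ already recorded just above the lemma (this comes from summing the Neumann series $\sum_{j\geq 0}(A-LC)^j = (I-A+LC)^{-1}$ in~\eqref{eq: attacked_r}, which converges precisely because $A-LC$ is Schur). Writing $N := I-A$, the hypothesis that $A$ has no unit eigenvalue guarantees $N$ is invertible, so $M = CN^{-1}$ is well defined and the target reduces to the single identity
\begin{align*}
    I - C(N+LC)^{-1}L = \big(I + CN^{-1}L\big)^{-1}.
\end{align*}

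Before manipulating inverses, I would first confirm that the right-hand side is meaningful, i.e. that $I+ML = I + CN^{-1}L$ is invertible. By Sylvester's determinant theorem, $\det(I + CN^{-1}L) = \det(I + N^{-1}LC) = \det(N)^{-1}\det(N+LC) = \det(I-A)^{-1}\det\big(I-(A-LC)\big)$. The first factor is nonzero by the no-unit-eigenvalue assumption, and the second is nonzero because $A-LC$ being Schur forces $1$ to not be one of its eigenvalues. Hence $I+ML$ is invertible and the claimed expression $(I+ML)^{-1}$ exists.

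The core step is the algebraic identity itself, which I would obtain from the push-through (Sherman–Morrison–Woodbury) identity $(I + XZ)^{-1} = I - X(I + ZX)^{-1}Z$ applied with $X = C$ and $Z = N^{-1}L$. This yields $(I + CN^{-1}L)^{-1} = I - C(I + N^{-1}LC)^{-1}N^{-1}L$, and absorbing $N^{-1}$ through $(I+N^{-1}LC)^{-1}N^{-1} = \big(N(I+N^{-1}LC)\big)^{-1} = (N+LC)^{-1}$ gives exactly $I - C(N+LC)^{-1}L = \Phi(\infty,L)$, as required. Equivalently, to avoid quoting Woodbury one can verify $(I+ML)\Phi(\infty,L) = I$ directly: expanding the product and factoring $C$ on the left and $L$ on the right, the cross terms collapse via $(N+LC)^{-1} + N^{-1}LC(N+LC)^{-1} = N^{-1}(N+LC)(N+LC)^{-1} = N^{-1}$, leaving $I$.

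I expect the only genuine obstacle to be bookkeeping with the inverses rather than any conceptual difficulty: one must track which matrices are invertible and apply the push-through step in the correct order. Both inverses needed, $(N+LC)^{-1}$ and $N^{-1}$, are available under the stated hypotheses, and since all matrices are square, the one-sided relation $(I+ML)\Phi(\infty,L)=I$ already establishes $\Phi(\infty,L) = (I+ML)^{-1}$.
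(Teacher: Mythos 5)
Your proposal is correct, and it proves the identity by essentially the same device as the paper --- a Woodbury-type inversion identity --- but applied in the opposite direction, which makes your version slightly cleaner. The paper starts from $C(I-A+LC)^{-1}L$ and applies its ``binomial inverse theorem'' with $S=I-A$, $X=L$, $Y=C$ to get $ML-ML(I+ML)^{-1}ML$, then recognizes this as the expansion of $\big((ML)^{-1}+I\big)^{-1}$ via a second application of the same lemma; that intermediate step formally invokes $(ML)^{-1}$, which need not exist (e.g.\ when $L$ has low column rank the $n_y\times n_y$ matrix $ML$ is singular), so the paper's argument is only valid after one observes that the final identity $ML-ML(I+ML)^{-1}ML=(I+ML)^{-1}ML$ holds without that invertibility. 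You instead start from $(I+ML)^{-1}$, apply the push-through identity $(I+XZ)^{-1}=I-X(I+ZX)^{-1}Z$ with $X=C$, $Z=(I-A)^{-1}L$, and absorb $(I-A)^{-1}$ into the inner inverse in one step, which never requires $ML$ to be invertible. You also supply something the paper leaves implicit: the Sylvester determinant computation $\det(I+ML)=\det(I-A)^{-1}\det\big(I-(A-LC)\big)\neq 0$ showing that the right-hand side of the lemma is well defined under exactly the two stated hypotheses. Your fallback of directly verifying $(I+ML)\Phi(\infty,L)=I$ is also sound and is arguably the most elementary route of all.
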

\begin{proof}
The proof is relegated to Section~\ref{sec: proof}.
\end{proof}

The Schur stability of $(A-LC)$ ensures the existence of a steady-state transition matrix $\Phi(\infty,L)$, while the absence of unit eigenvalues in $A$ guarantees the invertibility of $I-A$.
If $A$ has eigenvalues at~$1$, zero-dynamics attacks $\bar{a}$ can be constructed from the corresponding eigenvectors of $A$, rendering $\lim_{k \rightarrow \infty} \bar{r}_a = 0$ for any $L$~\cite{teixeira2015secure}. Consequently, such attacks remain undetectable in the steady state. Thus, they are excluded from our analysis in this part.
For a comprehensive discussion of zero-dynamics attacks and detection strategies, see \cite{teixeira2012revealing}.
With the alternative expression of the steady-state transition matrix in~\eqref{eq: alternative expression}, we obtain a bi-convex approximation of the optimization problem~\eqref{eq: OptPro_KLD_inf} in the following theorem.

\begin{theorem}[Bi-convex approximation for steady-state detectability]\label{thm: KLDinf BMI}
    Consider the error dynamics~\eqref{eq: error_dynamics} in the presence of BIAs and assume that~$A$ has no unit eigenvalues.
    Utilizing the alternative expression of~$\Phi(\infty,L)$ in~\eqref{eq: alternative expression}, the reformulated optimization problem~\eqref{eq: OptPro_KLD_inf} for detection observer design at~$k=\infty$ can be approximated through the following bi-convex optimization problem
    \begin{align}
        \max ~&\lambda \notag\\
         \textup{s.t.}~&\lambda \in \mathbb{R}_+, L \in \mathbb{R}^{n_x \times n_y}, P \in \mathbb{S}^{n_x}_{\succ 0}, \tilde{Z} \in \mathbb{S}^{n_y}_{\succ 0}, Y \in \mathbb{R}^{n_y \times n_a}, \eqref{eq: LMIs}, \notag \\
         &\begin{bmatrix}
            D^{\top}_a Y + Y^{\top} D_a-\lambda D^{\top}_a W D_a &Y^{\top}(I+ML)\\
            * &\frac{1}{2} \tilde{Z}
        \end{bmatrix} \succeq 0. \label{eq: BMIs_KLD_inf}
    \end{align}  
The optimal objective value $\lambda^*$ satisfies~$\lambda^* < J_{\infty}(L_{\infty}^*)$.
\end{theorem}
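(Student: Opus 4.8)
The plan is to follow the structure of the proof of Theorem~\ref{thm: KLD1_BMI}, adapting only the completion-of-squares step so that it absorbs the matrix inverse now present in the steady-state transition matrix. By Lemma~\ref{lem: Alternative trans matrix} (which is exactly why $A$ is assumed to have no unit eigenvalues, ensuring $(I-A)^{-1}$ exists), we write $\Phi(\infty,L) = (I+ML)^{-1}$ with $M = C(I-A)^{-1}$. Setting $F := I+ML$, the matrix-inequality constraint in~\eqref{eq: OptPro_KLD_inf} becomes
\[
\tfrac{1}{2} D^{\top}_a F^{-\top} \Sigma_{r_{\omega}}^{-1}(L) F^{-1} D_a - \lambda D^{\top}_a W D_a \succeq 0,
\]
and the goal is to show that feasibility of~\eqref{eq: BMIs_KLD_inf} implies feasibility of this constraint, so that its optimal value $\lambda^*$ lower-bounds $J_{\infty}(L_{\infty}^*)$.

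The heart of the argument, and the step I expect to be the main obstacle, is choosing the completion of squares so that the inverse $F^{-1}$ cancels and leaves a term that is \emph{linear} in the free variable $Y$. The decisive choice is to insert the factor $F^{\top}$ on the auxiliary variable: since $\Sigma_{r_{\omega}}(L) \succ 0$, for every $Y \in \R^{n_y \times n_a}$,
\[
\tfrac{1}{2}\big(F^{-1}D_a - 2\Sigma_{r_{\omega}}(L) F^{\top} Y\big)^{\top} \Sigma_{r_{\omega}}^{-1}(L)\big(F^{-1}D_a - 2\Sigma_{r_{\omega}}(L) F^{\top} Y\big) \succeq 0.
\]
Upon expansion, the cross terms collapse via $F^{-\top}\Sigma_{r_{\omega}}^{-1}(L)\Sigma_{r_{\omega}}(L)F^{\top} = F^{-\top}F^{\top} = I$, so the inverse disappears and the inequality rearranges to
\[
\tfrac{1}{2} D_a^{\top} F^{-\top}\Sigma_{r_{\omega}}^{-1}(L) F^{-1}D_a \succeq D_a^{\top}Y + Y^{\top}D_a - 2 Y^{\top} F \Sigma_{r_{\omega}}(L) F^{\top} Y.
\]
This is precisely why the off-diagonal block of~\eqref{eq: BMIs_KLD_inf} carries the factor $(I+ML)=F$, rather than appearing as $Y^{\top}$ alone as in the one-step case.

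It then remains to discharge the residual $\Sigma_{r_{\omega}}(L)$ and pass to an LMI exactly as before. Invoking Lemma~\ref{lem: Alt_InvMat}, the two inequalities~\eqref{eq: LMIs} simultaneously guarantee $\rho(A-LC)<1$ and $\Sigma_{r_{\omega}}^{-1}(L) \succeq \tilde{Z}$, equivalently $\Sigma_{r_{\omega}}(L) \preceq \tilde{Z}^{-1}$; hence $-2Y^{\top}F\Sigma_{r_{\omega}}(L)F^{\top}Y \succeq -2Y^{\top}F\tilde{Z}^{-1}F^{\top}Y$, giving
\[
\tfrac{1}{2} D_a^{\top} F^{-\top}\Sigma_{r_{\omega}}^{-1}(L) F^{-1}D_a \succeq D_a^{\top}Y + Y^{\top}D_a - 2 Y^{\top} F \tilde{Z}^{-1} F^{\top} Y.
\]
On the other side, applying the Schur complement to the block inequality in~\eqref{eq: BMIs_KLD_inf} (its $(2,2)$-block $\tfrac{1}{2}\tilde{Z}$ being positive definite) yields $D_a^{\top}Y + Y^{\top}D_a - 2Y^{\top}F\tilde{Z}^{-1}F^{\top}Y \succeq \lambda D_a^{\top}WD_a$. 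Chaining the two inequalities recovers the constraint of~\eqref{eq: OptPro_KLD_inf}, so~\eqref{eq: BMIs_KLD_inf} is a sufficient (hence conservative) reformulation and $\lambda^* \le J_{\infty}(L_{\infty}^*)$; the strictness in the relaxation $\Sigma_{r_{\omega}}^{-1}(L)\succ\tilde{Z}$ and in the completed square accounts for the strict bound $\lambda^* < J_{\infty}(L_{\infty}^*)$. The only genuinely new ingredient relative to Theorem~\ref{thm: KLD1_BMI} is the $F^{\top}$-weighted completion of squares; everything downstream is routine.
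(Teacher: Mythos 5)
Your proof is correct and follows essentially the same route as the paper: the paper invokes Young's relation from Lemma~\ref{lem: Matrix_Ineq} to obtain exactly the inequality $\tfrac{1}{2}D^{\top}_a(I+ML)^{-\top}\Sigma_{r_{\omega}}^{-1}(L)(I+ML)^{-1}D_a \succeq D^{\top}_aY+Y^{\top}D_a-2Y^{\top}(I+ML)\Sigma_{r_{\omega}}(L)(I+ML)^{\top}Y$ that you derive by explicit completion of squares with the $F^{\top}$-weighted auxiliary term, and then likewise combines the Schur complement of~\eqref{eq: BMIs_KLD_inf} with $\Sigma_{r_{\omega}}^{-1}(L)\succeq\tilde{Z}$ from Lemma~\ref{lem: Alt_InvMat}. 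The only cosmetic difference is that you write out the square explicitly rather than citing the lemma, and neither your argument nor the paper's rigorously upgrades $\lambda^*\le J_{\infty}(L_{\infty}^*)$ to the strict inequality stated in the theorem.
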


\begin{proof} 
    Utilizing the Young's relation in Lemma~\ref{lem: Matrix_Ineq} leads to
    \begin{align}\label{eq: KLDinf_BMI_proof 1}
    &\frac{1}{2}D^{\top}_a (I + ML )^{-\top} \Sigma_{r_{\omega}}(L)^{-1} (I + ML )^{-1} D_a \notag\\ \succeq  &D^{\top}_a Y + Y^{\top} D_a  - 2 Y^{\top} (I+ML) \Sigma_{r_{\omega}}(L) (I+ML)^{\top} Y,
    \end{align}
    where $(I + ML) \Sigma_{r_{\omega}}(L) (I + ML )^{\top}$ is symmetric positive definite as $I+ML$ is invertible.
    Then, applying Schur complement to~\eqref{eq: BMIs_KLD_inf} and considering the inequality $\Sigma_{r_{\omega}}(L)^{-1} \succeq \tilde{Z}$ from \eqref{eq: LMIs}, we have~$D^{\top}_a Y + Y^{\top} D_a  - 2 Y^{\top} (I+ML) \Sigma_{r_{\omega}}(L) (I+ML)^{\top} Y -\lambda D^{\top}_a W D_a \succeq 0$. 
    Together with the inequality~\eqref{eq: KLDinf_BMI_proof 1}, it holds that
    \begin{align*}
        \frac{1}{2}D^{\top}_a (I + ML )^{-\top} \Sigma_{r_{\omega}}(L)^{-1} (I + ML )^{-1} D_a -\lambda D^{\top}_a W D_a \succeq 0.
    \end{align*}
    Therefore, constraints in~\eqref{eq: OptPro_KLD_inf} are satisfied. This completes the proof.
\end{proof}

We would like to highlight that although our previous work~\cite{tosun2025kullbackEJC} also formulates the observer design for steady-state detectability as a bi-convex optimization problem, its derivation relies on the invertibility of~$W$. 
Thus, the method in~\cite{tosun2025kullbackEJC} is inapplicable under the more general case when $W$ is positive semidefinite.
The derived bi-convex optimization problem~\eqref{eq: BMIs_KLD_inf} can be efficiently solved using AO and ADMM as mentioned above.
In addition, an LMI-based optimization problem serving as an approximation of~\eqref{eq: OptPro_KLD_inf} is further provided in the following theorem.

\begin{theorem}[LMI-based approximation for steady-state detectability]\label{thm: KLDinf_LMI}
    Consider the error dynamics~\eqref{eq: error_dynamics} in the presence of BIAs and assume that~$A$ has no unit eigenvalues.
    Given a positive scalar~$\gamma \in \R_+$ and the alternative expression of~$\Phi(\infty,L)$ in~\eqref{eq: alternative expression}, the reformulated optimization problem~\eqref{eq: OptPro_KLD_inf} for detection observer design at~$k=\infty$ can be approximated through the following LMI-based optimization problem
    \begin{align}
        \max ~&\lambda \notag\\
        \textup{s.t.}~&\lambda \in \mathbb{R}_+, ~P \in \mathbb{S}^{n_x}_{\succ 0}, ~\tilde{Z} \in \mathbb{S}^{n_y}_{\succ 0}, ~G \in \mathbb{R}^{n_x \times n_y}, ~Y \in \mathbb{R}^{n_y \times n_a}, ~\eqref{eq: LMIs_KLD1 c1}, \notag\\
        &\begin{bmatrix}
       \Theta_1 
    &\begin{bmatrix}
        \Theta_2 &\Theta_3
    \end{bmatrix} \\
    * &\begin{bmatrix}
        1/\gamma P &0\\
        0 &\gamma P
    \end{bmatrix}
    \end{bmatrix} \succeq 0, \label{eq: LMIs_KLDinf}
    \end{align}  
    where $L=P^{-1}G$ and the involved matrices are defined as 
    \begin{align*}
        &\Theta_1 = \begin{bmatrix}
        D^{\top}_a Y + Y^{\top} D_a-\lambda D^{\top}_a W D_a &Y^{\top}\\
        * &\frac{1}{2} \tilde{Z}
    \end{bmatrix},
    ~\Theta_2 = \begin{bmatrix}
        -Y^{\top} M \\ 0
    \end{bmatrix}, 
    ~\Theta_3 = \begin{bmatrix}
        0 \\ G^{\top}
    \end{bmatrix}.
    \end{align*}
\end{theorem}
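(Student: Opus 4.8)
The plan is to start from the bi-convex reformulation established in Theorem~\ref{thm: KLDinf BMI}: its feasibility already certifies the original steady-state constraint in~\eqref{eq: OptPro_KLD_inf}, so it suffices to show that feasibility of the LMI~\eqref{eq: LMIs_KLDinf}, together with~\eqref{eq: LMIs_KLD1 c1}, implies feasibility of the bi-convex matrix inequality~\eqref{eq: BMIs_KLD_inf}. The two obstructions to linearity in~\eqref{eq: BMIs_KLD_inf} are the bilinear off-diagonal block $Y^{\top}(I+ML)$ and the product $PL$ buried in the stability and covariance conditions~\eqref{eq: LMIs}. I would remove the latter by the standard change of variables $G = PL$, under which~\eqref{eq: LMIs} becomes exactly~\eqref{eq: LMIs_KLD1 c1} and Lemma~\ref{lem: Alt_InvMat} still delivers $\rho(A-LC)<1$ and $\Sigma_{r_{\omega}}^{-1}(L) \succeq \tilde{Z}$. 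The former I would treat with Young's relation.

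Concretely, I would split the bi-convex matrix so as to isolate the purely bilinear cross term,
\[
\begin{bmatrix} D_a^{\top}Y + Y^{\top}D_a - \lambda D_a^{\top}WD_a & Y^{\top}(I+ML) \\ * & \tfrac12 \tilde{Z}\end{bmatrix} = \Theta_1 + \begin{bmatrix} 0 & Y^{\top}ML \\ L^{\top}M^{\top}Y & 0\end{bmatrix}.
\]
Writing the cross term as $U^{\top}V + V^{\top}U$ with $U = \begin{bmatrix} M^{\top}Y & 0\end{bmatrix}$ and $V = \begin{bmatrix} 0 & L\end{bmatrix}$, Young's relation (Lemma~\ref{lem: Matrix_Ineq}) with weight $S = \gamma P^{-1}$ yields the lower bound $U^{\top}V + V^{\top}U \succeq -\gamma\, U^{\top}P^{-1}U - \tfrac1\gamma\, V^{\top}PV$. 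Evaluating the two terms produces the block-diagonal corrections $-\gamma\,\mathrm{diag}(Y^{\top}MP^{-1}M^{\top}Y,\,0)$ and $-\tfrac1\gamma\,\mathrm{diag}(0,\,L^{\top}PL)$; substituting $G=PL$ turns $L^{\top}PL$ into $G^{\top}P^{-1}G$, so that $\Theta_1$ minus these corrections is a valid lower bound for the bi-convex matrix that is affine in the recovered variables $\lambda, P, \tilde{Z}, Y, G$.

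It then remains to recognize this lower bound as a Schur complement: with $\Theta_2 = -Y^{\top}M$ and $\Theta_3 = G^{\top}$, the two corrections are precisely $\gamma\,\Theta_2 P^{-1}\Theta_2^{\top}$ and $\tfrac1\gamma\,\Theta_3 P^{-1}\Theta_3^{\top}$, so the lower bound is positive semidefinite if and only if the block matrix in~\eqref{eq: LMIs_KLDinf} with tail $\mathrm{diag}(\tfrac1\gamma P,\,\gamma P)$ is positive semidefinite. Chaining the implications, namely~\eqref{eq: LMIs_KLDinf} $\Rightarrow$ the bi-convex constraint~\eqref{eq: BMIs_KLD_inf} $\Rightarrow$ the original constraint in~\eqref{eq: OptPro_KLD_inf}, shows that the LMI is an inner approximation producing a feasible, generally conservative observer gain $L=P^{-1}G$. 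The main delicacy I expect is the bookkeeping of scalings: the weight $S=\gamma P^{-1}$ must be chosen so that the two Young terms fall into disjoint diagonal blocks and, after the substitution $G=PL$, reassemble into genuinely affine data; matching the $\gamma$ versus $1/\gamma$ placement and the $P^{-1}$ conjugations exactly to the Schur complement of~\eqref{eq: LMIs_KLDinf}, together with confirming the correct (lower-bound) direction of Young's relation, is where the computation must be carried out with care. This argument runs parallel to the proof of Theorem~\ref{thm: KLD1LMI}.
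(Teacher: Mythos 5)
Your proposal is correct and takes essentially the same route as the paper's proof: the paper likewise takes the Schur complement of \eqref{eq: LMIs_KLDinf}, applies the generalized Young's relation to the cross term $-\Theta_2 P^{-1}\Theta_3^{\top}-\Theta_3 P^{-1}\Theta_2^{\top}$ (which is exactly your $U^{\top}V+V^{\top}U$ once $G=PL$ is substituted), and thereby recovers the bi-convex constraint \eqref{eq: BMIs_KLD_inf}, whose feasibility certifies the original constraint in \eqref{eq: OptPro_KLD_inf}. Your $\gamma$ versus $1/\gamma$ placement and the direction of the Young bound both check out against the paper's computation.
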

\begin{proof}
   The proof is relegated to Section~\ref{sec: proof}.
\end{proof}

\begin{remark}[Trade-off between transient and steady-state detectability]
    Sections~\ref{subsec: KLD1} and~\ref{subsec: KLDinf} investigate the detectability of worst-case BIAs in the one-step and steady-state cases, respectively. 
    Building on these results, the two objectives can be jointly considered by formulating a multi-objective optimization problem, which provides a flexible way to balance transient and steady-state detectability. This can be achieved by solving~$\max_{L \in \R^{n_x \times n_y}} \{\alpha J_1(L)+ (1-\alpha)J_{\infty}(L): \rho(A-LC)<1 \}$, where $\alpha \in [0,1]$ is the weighting parameter.
\end{remark}

%
\section{Algorithms design and residual evaluation}\label{sec: algorithms}
This section presents the two algorithms, namely, AO and ADMM, to solve the bi-convex optimization problems~\eqref{eq: BMIs_KLD1} and~\eqref{eq: BMIs_KLD_inf}, followed by the residual evaluation method. 

\subsection{AO and ADMM algorithms}\label{subsec: AO and ADMM}
Let us introduce AO first, which addresses the bi-linear terms (e.g., $PL$ and~$Y^{\top}CL$ in~\eqref{eq: BMIs_KLD1}, or $PL$ and $Y^{\top}ML$ in~\eqref{eq: BMIs_KLD_inf}) by partitioning the coupling decision variables into two disjoint sets, e.g.,~$\{P,Y\}$ and~$\{L\}$.
At each iteration of AO, one set of variables is fixed while the rest variables are optimized, reducing the problem to a tractable linear optimization problem. 
This process alternates until convergence or some predefined stopping criteria are met.
AO is easy to implement and widely used in practice.
When all subproblems are convex, it guarantees convergence to a stationary point. However, it provides no guarantees on convergence rate or global optimality.

When using AO to solve~\eqref{eq: BMIs_KLD1} and~\eqref{eq: BMIs_KLD_inf}, the algorithm can be initialized with any gain matrix~$L$ that ensures~$\rho (A-LC)$ is Schur, e.g., the Kalman filter gain.
The algorithm terminates either when the solution converges, i.e.,~$\|L^{(i)} - L^{(i-1)}\|_2 \leq \epsilon$ where~$\epsilon \in \R_+$ is the convergence tolerance chosen by users, or when the maximum iteration number~$N_{ite}$ is reached.
The implementation details of AO in solving~\eqref{eq: BMIs_KLD1} and~\eqref{eq: BMIs_KLD_inf} are summarized in Algorithm~\ref{alg:algorithm_1}.

\begin{algorithm}[t] 
	\caption{AO for solving BIA detection observer design in~\eqref{eq: BMIs_KLD1} and~\eqref{eq: BMIs_KLD_inf}.} \label{alg:algorithm_1} 
	\begin{algorithmic}
	\State 1. \textbf{Initialize}: 
        Consider the optimization problem~\eqref{eq: BMIs_KLD1} (or~\eqref{eq: BMIs_KLD_inf}), and choose~$L^{(0)}$,~$\epsilon$, $N_{ite}$ 
        \State 2. \textbf{Set}: $i=0$, $L^{(-1)} = 0$
        \State 3. \textbf{While} $\|L^{(i)} - L^{(i-1)}\|_2 > \epsilon$ $\wedge$ $i < N_{ite}$, \textbf{do}
	\begin{itemize}
	    \item[(a)] \textbf{Step 1}: Minimization over~$(P,Y,\tilde{Z},\lambda)$ with $L^{(i)}$
                \begin{itemize}
	              \item[] $(P^{(i+1)},Y^{(i+1)}) =\arg \min\limits_{P,Y,\tilde{Z},\lambda} \{-\lambda: \eqref{eq: BMIs_KLD1} ~(\textup{or}~\eqref{eq: BMIs_KLD_inf})\}$
                \end{itemize}

            \item[(b)] \textbf{Step 2}: Minimization over~$(L,\tilde{Z},\lambda)$ with $(P^{(i+1)},Y^{(i+1)})$
	          \begin{itemize}
	             \item[] $L^{(i+1)} =\arg \min\limits_{L,\tilde{Z},\lambda} \{-\lambda: \eqref{eq: BMIs_KLD1} ~(\textup{or}~\eqref{eq: BMIs_KLD_inf})\}$
                 \end{itemize}
            \item[(c)] $i=i+1$
            \item[(g)] End while and return~$L^{(i)}$
	\end{itemize}
	\end{algorithmic} 
\end{algorithm}

In what follows, we present the approach to solving the optimization problems~\eqref{eq: BMIs_KLD1} and~\eqref{eq: BMIs_KLD_inf} using ADMM.
Let us briefly explain the mechanism of ADMM.
By introducing auxiliary variables and constraints, ADMM constructs an augmented Lagrangian and then alternately optimizes the subproblems and Lagrange multipliers to find solutions efficiently.
When applying ADMM to~\eqref{eq: BMIs_KLD1} and~\eqref{eq: BMIs_KLD_inf}, we define auxiliary variables~$G = PL$ and $Q=P^{-1}$, yielding~$[PQ ~QG] = [I ~L]$. 
The inequalities~\eqref{eq: LMIs} in ~\eqref{eq: BMIs_KLD1} and~\eqref{eq: BMIs_KLD_inf} are replaced with~\eqref{eq: LMIs_KLD1 c1} accordingly.
The following augmented Lagrangian can be constructed
\begin{align*}
    \mathcal{L}_{\eta}(\lambda,P,Q,G,L,\Lambda) 
    = -\lambda + \textup{tr}(\Lambda^{\top} ([PQ ~QG] - [I ~L])) + \frac{\eta}{2} \| [PQ ~QG] - [I ~L] \|^2_2,
\end{align*}
where $\Lambda \in \R^{n_x \times (n_x+n_y)}$ is the Lagrange multiplier associated with the constraint~$[PQ ~QG] = [I ~L]$, $\eta \in \R_+ $ is the given step size.
Then, we can solve the observer gain by minimizing~$\mathcal{L}_{\eta}(\lambda,P,Q,G,L,\Lambda)$ subject to constraints in~\eqref{eq: BMIs_KLD1} (or~\eqref{eq: BMIs_KLD_inf}).
Note that $PQ$, $QG$ in~$\mathcal{L}_{\eta}(\lambda,P,Q,G,L,\Lambda)$ and $Y^{\top}CL$ in~\eqref{eq: BMIs_KLD1} (or $Y^{\top}ML$ in~\eqref{eq: BMIs_KLD_inf}) are bi-linear terms.
Thus, these coupling variables can be partitioned into two sets, i.e.,~$\{P,G,Y\}$ and $\{Q, L\}$, and then updated in an alternative fashion along with the Lagrange multiplier~$\Lambda$. 

Similar to AO, ADMM can be initiated with any $L$ that renders~$A-LC$ Schur stable. 
The initial $Q$ matrix is set as the inverse of the steady-state error covariance matrix induced by the chosen initial~$L$.
The algorithm stops when the observer gain matrix converges, i.e.,~$\|L^{(i)} - L^{(i-1)}\|_2 \leq \epsilon_1$ and $\| [PQ ~QG] - [I ~L] \|_2 \leq \epsilon_2$ or when the maximum number of iterations~$N_{ite}$ is reached.
We summarize the implementation steps of ADMM in solving~\eqref{eq: BMIs_KLD1} and~\eqref{eq: BMIs_KLD_inf} in Algorithm~\ref{alg:algorithm_2}.

\begin{remark}[Initialization with the LMI-based solutions]
The LMI-based approximations~\eqref{eq: LMIs_KLD1} and~\eqref{eq: LMIs_KLDinf} yield convex optimization problems that can be solved efficiently. Despite their conservatism due to the relaxations involved, the LMI-based solutions provide effective initializations for AO and ADMM algorithms.
\end{remark}

\begin{algorithm}[t] 
	\caption{ADMM for solving BIA detection observer design in~\eqref{eq: BMIs_KLD1} and~\eqref{eq: BMIs_KLD_inf}.} \label{alg:algorithm_2} 
	\begin{algorithmic}
	\State 1. \textbf{Initialize}: 
        Consider the optimization problem~\eqref{eq: BMIs_KLD1} (or~\eqref{eq: BMIs_KLD_inf}) and the Lagrangian~$\mathcal{L}_{\eta}$, and choose~$L^{(0)}$,~$Q^{(0)}$,~$\epsilon_1$,~$\epsilon_2$,~$N_{ite}$
        \State 2. \textbf{Set}: $i=0$, $L^{(-1)} = 0$, $P^{(0)} = (Q^{(0)})^{-1}$, $G^{(0)} = P^{(0)} L^{(0)}$, $\Lambda^{(0)}=0$, $\eta = 10000$
        \State 3. \textbf{While} ($\|L^{(i)} - L^{(i-1)}\|_2 > \epsilon_1$ $\vee$ $\| [P^{(i)}Q^{ (i)} ~Q^{(i)}G^{(i)}] - [I ~L^{(i)}] \|_2 > \epsilon_2$) $\wedge$ $i < N_{ite}$, \textbf{do}
	\begin{itemize}
	    \item[(a)]  \textbf{Step 1}: Minimization over~$(P,G, Y,\tilde{Z},\lambda)$ with $(L^{(i)},Q^{(i)})$
                \begin{itemize}
	              \item[] $(P^{(i+1)},G^{(i+1)},Y^{(i+1)},\tilde{Z}^{(i+1)})$$=\arg \min\limits_{P,G,Y,\tilde{Z},\lambda} \{\mathcal{L}_{\eta}: ~\eqref{eq: BMIs_KLD1} ~(\textup{or}~\eqref{eq: BMIs_KLD_inf})\}$
                \end{itemize}

            \item[(b)]  \textbf{Step 2}: Minimization over~$(L,Q,\lambda)$ with $(P^{(i+1)},G^{(i+1)},Y^{(i+1)},\tilde{Z}^{(i+1)})$
	          \begin{itemize}
	             \item[] ($L^{(i+1)},Q^{(i+1)}) =\arg \min\limits_{L,Q,\lambda} \{\mathcal{L}_{\eta}: \eqref{eq: BMIs_KLD1} ~(\textup{or}~\eqref{eq: BMIs_KLD_inf})\}$
                 \end{itemize}
                 
            \item[(c)]  \textbf{Step 3}: Update $\Lambda$
	          \begin{itemize}
	             \item[] $\Lambda^{(i+1)} =\Lambda^{(i)} + \eta ([P^{(i+1)}Q^{i+1} ~Q^{(i+1)}G^{i+1}]-[I ~L^{(i+1)}])$
                 \end{itemize}
            \item[(c)] $i=i+1$
            \item[(g)] End while and return~$L^{(i)}$
	\end{itemize}
	\end{algorithmic} 
\end{algorithm}

\subsection{Residual evaluation}
The above results provide approaches for designing detection observers such that the generated residuals~$r$ are sensitive to worst-case BIAs. 
However, since these residuals are multi-dimensional and affected by noise, they are not suitable for directly determining the occurrence of attacks.
Therefore, we introduce the squared Mahalanobis distance as the residual evaluation function, which is defined as
\begin{align*}
   \mathcal{I}(k) = r^{\top}(k) \Sigma^{-1}_{r_{\omega}} r(k).
\end{align*}
Note that the Mahalanobis distance characterizes the deviation of a sample from its underlying distribution and is widely used as a measure of statistical discrepancy.

Furthermore, as discussed in Section 2.2.2, residuals in the steady state follow a zero-mean Gaussian distribution in the absence of BIAs. 
Consequently, the evaluation function $\mathcal{I}(k)$ is subject to a chi-square distribution with $n_y$ degrees of freedom.
Its values are concentrated near the origin on the positive real axis with some probability guarantees. 
In contrast, the presence of attacks changes the residual distribution, leading to a shift in the values of $\mathcal{I}(k)$. 
Based on this observation, the following detection logic is adopted:
\begin{align*}
    \left\{ \begin{array}{ll}
        \mathcal{I}(k) \leq \tau, & \textup{normal}, \\
         \mathcal{I}(k) > \tau,    & \textup{attack detected}.
    \end{array} \right. 
\end{align*}
where the threshold $\tau \in \R_+$ is selected according to a desired false alarm rate $\xi$, such that the concentration inequality $\mathbb{P}(\mathcal{I}(k) > \tau | a(k)=0) < \xi$ is satisfied.

\section{Numerical examples} \label{sec:simulation}
In this section, the effectiveness of the proposed BIA detection methods is validated through an application to a thermal system.

\begin{figure}[t]
    \centering
    \includegraphics[width=0.4\linewidth]{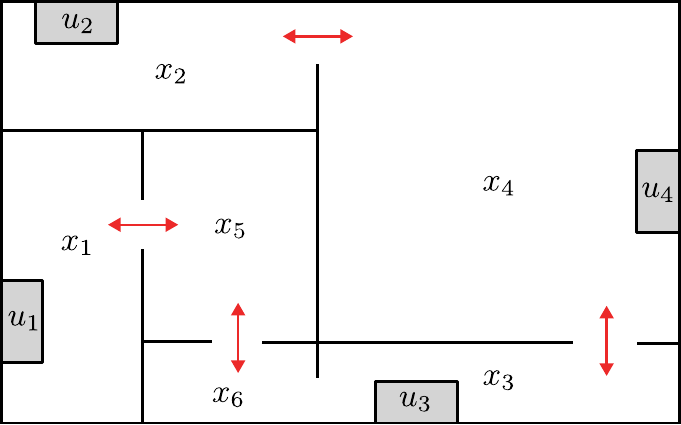}
    \caption{\small Illustration of a six-room thermal system~\cite{yang2023lasso}.}
    \label{fig: TC System}
\end{figure}

\subsection{Thermal system description}
Consider the thermal system depicted in Fig.~\ref{fig: TC System}, which consists of six rooms, and four of them are equipped with air-conditioning units.
The bidirectional arrows indicate thermal transmission through vents between adjacent rooms, with a thermal transmission rate of 0.1. 
Each vent is equipped with a sensor that measures the average temperature of the two connected rooms. 
Let~$x_{i}$, $i \in \{1,2,\dots,6\}$, denote the difference between the temperature in the $i$th room and the desired setpoint. 
The dynamics of the thermal system can be described by the following state-space model~\cite{yang2023lasso}:
\begin{align*}
    &x(k+1) = \begin{bmatrix}
        0.8 &0 &0 &0 &0.1 &0\\
        0 &0.8 &0 &0.1 &0 &0\\
        0 &0 &0.7 &0.1 &0 &0.1\\
        0 &0.1 &0.1 &0.7 &0 &0\\
        0.1 &0 &0 &0 &0.7 &0.1\\
        0 &0 &0.1 &0 &0.1 &0.7
    \end{bmatrix} x(k) + 
    \begin{bmatrix}
        I \\ 0_{2 \times 4}
    \end{bmatrix} u(k), \\
    &y(k) = \begin{bmatrix}
        0.5 &0 &0 &0 &0.5 &0\\
        0 &0.5 &0 &0.5 &0 &0\\
        0 &0 &0.5 &0.5 &0 &0\\
        0 &0 &0 &0 &0.5 &0.5\\
        0 &0 &0.5 &0 &0 &0.5
    \end{bmatrix} x(k). 
\end{align*}
The controller is an output feedback control $u(k) = K \hat{x}(k)$. 
To characterize the impact of noise, we choose matrices $B_{\omega} = 0.1 [I_6 ~0_{6 \times 5}]$ and $D_{\omega} = 0.1 [0_{5 \times 6} ~I_5]$.

\subsection{Comparison Method}\label{subsec: H2Hmin}
To demonstrate the advantages of the proposed detection observer in detecting stealthy BIAs, we compare it with a recently developed robust residual generator presented in~\cite{dong2025robust}, which leverages mixed $\mathcal{H}_2/\mathcal{H}_{\_}$ indices.
Specifically, the residual generator guarantees the worst-case anomaly sensitivity by maximizing the $\mathcal{H}_{\_}$ index of the transfer function from anomaly signals to the residual, while simultaneously mitigating noise effects by constraining the $\mathcal{H}_2$ norm of the transfer function from noise to the residual.
To adapt the $\mathcal{H}_2/\mathcal{H}_{\_}$ filter to the BIA detection problem addressed in this paper, and based on the error dynamics in~\eqref{eq: error_dynamics}, we define the transfer functions from $a$ to $r$ and from $\omega$ to $r$ as~$\mathds{T}_{ar}$ and $\mathds{T}_{\omega r}$, respectively. 
The gain matrix of the $\mathcal{H}_2/\mathcal{H}_{\_}$ filter can then be obtained by solving the following optimization problem:
\begin{align*}
    \max_{L \in \R^{n_x \times n_y}} ~\left\{ \|\mathds{T}_{ar}(\e^{j\theta})\|^2_{\mathcal{H}_{\_}([\theta_1,\theta_2])}:  \|\mathds{T}_{\omega r}\|^2_{\mathcal{H}_{2}} \leq \delta \right\},
\end{align*}
where~$\|\mathds{T}_{ar}(\e^{j\theta})\|_{\mathcal{H}_{\_}([\theta_1,\theta_2])}$ is the $\mathcal{H}_{\_}$ index of~$\mathds{T}_{ar}$ over the frequency range~$[\theta_1,\theta_2]$, with ~$\theta_1 = \theta_2 = 0$ as the BIA occurs at zero frequency, $\|\mathds{T}_{\omega r}\|^2_{\mathcal{H}_{2}}$ is the $\mathcal{H}_2$ norm of $\mathds{T}_{\omega r}$, and $\delta \in \R_+$ is an upper bound on $\|\mathds{T}_{\omega r}\|^2_{\mathcal{H}_{2}}$.  
Following the procedure in~\cite{dong2025robust}, the optimization problem can be equivalently written as bi-linear matrix inequalities using the Generalized Kalman-Yakubovich-Popov (GKYP) lemma, and can be solved through the AO approach.

\subsection{Simulation results}
We first present the optimal (or worst-case) attack vector~$\bar{a}^{*}_{kal}$ constructed for the Kalman filter. 
Assume that sensors $1$, $3$, and $5$ in the thermal system are compromised. 
According to the definition of the attack matrix $D_a$ in Subsection~\ref{Subsec: BIA description}, the entries $(1,1)$, $(3,2)$, and $(5,3)$ of $D_a$ are equal to $1$, and the rest entries are zeros.
Suppose that the attack aims to increase the steady-state estimation error of the Kalman filter.
According to the error dynamics~\eqref{eq: error_dynamics}, we have
$\lim_{k \rightarrow \infty} \tilde{x}(k) = G_{y_a \tilde{x}}  D_a \bar{a}$,
where $G_{y_a \tilde{x}} = -(I - (A-L_{kal}C))^{-1}L_{kal}$ is the steady-state transition matrix from~$y_a$ to~$r$.
Then, following the formulation of the BIA impact in~\eqref{eq: Weighted_attack}, we define the weighting matrix as~$W = G^{\top}_{y_a \tilde{x}} G_{y_a \tilde{x}}$.
Given Kalman filter gain~$L_{kal}$ and the weighting matrix~$W$, the optimal attack vector~$\bar{a}^{*}$ is the solution to the optimization problem~\eqref{eq: Opt_a} at $k=0$, which corresponds to the smallest generalized eigenvalue of the matrix pair ~$(\frac{1}{2}  D^{\top}_a  \Sigma_{r_{\omega}}^{-1}(L_{kal})  D_a , D^{\top}_a W D_a)$.
By calculation, we have~$\bar{a}^*_{kal} = [-0.9532 ~0.0195 ~0.0425]
^{\top}$.

The stealthiness of the optimal attack is demonstrated in the following simulation results. For comparison, we consider a random attack~$\tilde{a}$, which satisfies the same impact constraint as~$\bar{a}^*_{kal}$, i.e.,~$\tilde{a}^{\top} D^{\top}_a W D_a \tilde{a} = 1$. 
The random attack used here is~$\tilde{a} = [0.8072 ~0.0307 ~0.7606]^{\top}$.
The number of simulation steps is~$300$ and the attacks are injected into the system as step signals at~$k=100$. 
The detection threshold is set to~$\tau = 16.7496$ given an acceptable false alarm rate~$\xi = 0.005$.
Fig.~\ref{fig: residual_stepattack} illustrates the Mahalanobis distance of residuals~$\mathcal{I}(k)$ generated by the Kalman filter under both the optimal attack~$\bar{a}^*_{kal}$ and the random attack~$\tilde{a}$, while Fig.~\ref{fig: impact_stepattack} shows the corresponding impacts of the two attacks on the state estimation error. 

As shown in Fig.~\ref{fig: residual_stepattack}, under the optimal attack~$\bar{a}^*_{kal}$, the Mahalanobis distance of the residual exceeds the threshold after attack onset, but quickly returns and remains below the threshold. 
This indicates that~$\bar{a}^*_{kal}$ exhibits strong stealthiness and has a low probability of being detected by the Kalman filter. 
In contrast, the residual induced by the random attack~$\tilde{a}$ stays above the threshold after the attack occurs, making it easier to be detected.
Fig.~\ref{fig: impact_stepattack} further shows that the optimal and random attacks yield comparable magnitudes of state estimation error. This demonstrates that the optimal attack can achieve a similar disruptive effect on the estimation performance of the Kalman filter while maintaining stealthiness.

\begin{figure}[t]
    \begin{minipage}[]{0.49\textwidth}
    \centering
    \captionsetup{justification=centering}
    \includegraphics[scale=0.6]{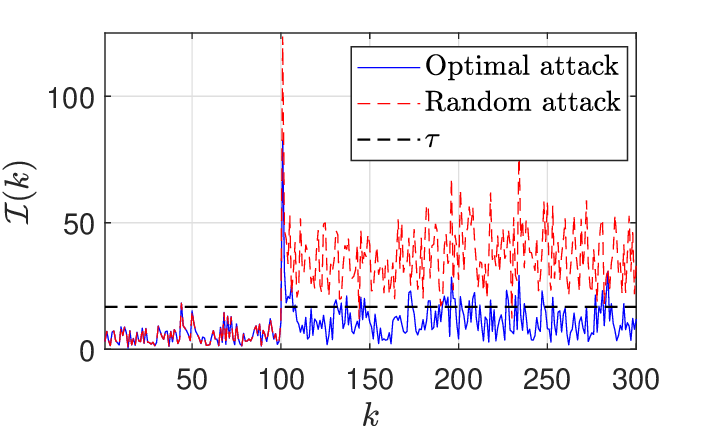} 
    \caption{\small Mahalanobis distance using Kalman filter under optimal and random step attacks.}\label{fig: residual_stepattack}
    \end{minipage} 
    \begin{minipage}{0.49\textwidth}
    \centering
    \captionsetup{justification=centering}
    \includegraphics[scale=0.6]{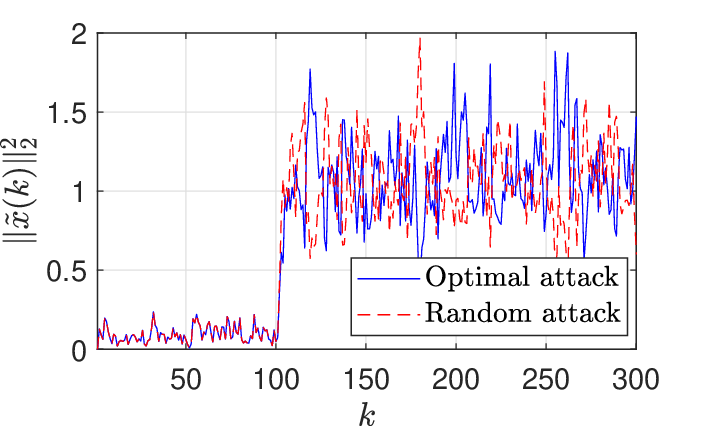} 
    \caption{\small Impacts of optimal and random step attacks on state estimation errors.}
    \label{fig: impact_stepattack}
    \end{minipage}
\end{figure}

\begin{figure}[t]
    \begin{minipage}[]{0.49\textwidth}
    \centering
    \captionsetup{justification=centering}
    \includegraphics[scale=0.6]{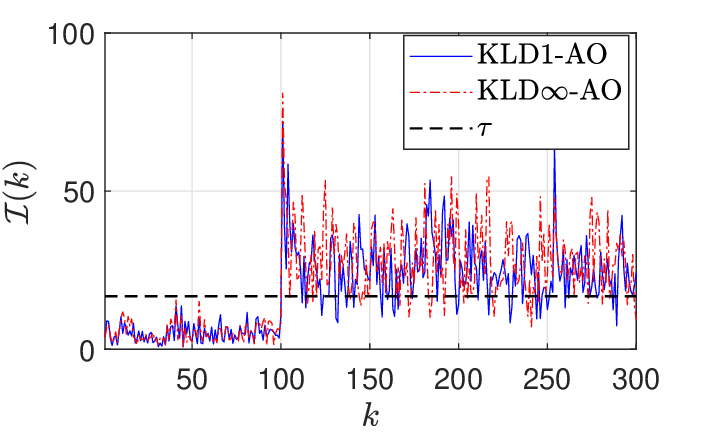} 
    \caption{\small Mahalanobis distance using KLD1 and KLD$\infty$ detectors under respective optimal attacks.}\label{fig: kldresidual_stepattack}
    \end{minipage} 
    \begin{minipage}{0.49\textwidth}
    \centering
    \captionsetup{justification=centering}
    \includegraphics[scale=0.6]{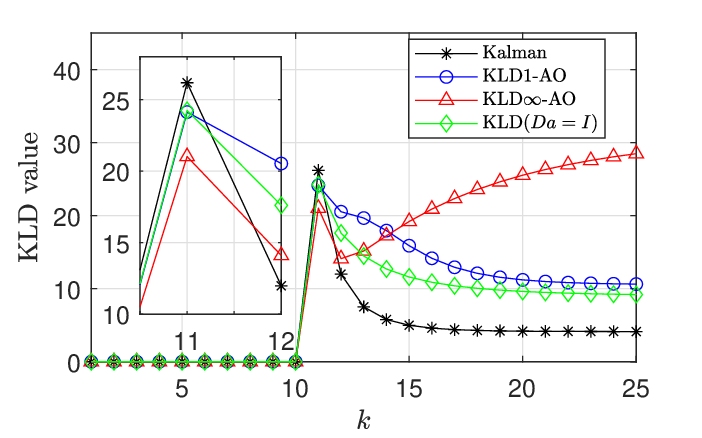} 
    \caption{\small KLD values of KLD$0$, KLD$1$, and KLD$\infty$ detectors under~$\bar{a}^*_{kal}$.}
    \label{fig: kld_stepattack}
    \end{minipage}
\end{figure}

\begin{figure}[t]
    \begin{minipage}[]{0.49\textwidth}
    \centering
    \captionsetup{justification=centering}
    \includegraphics[scale=0.6]{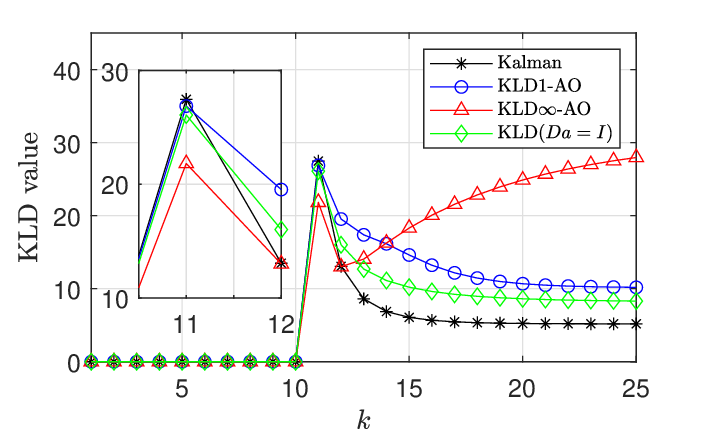} 
    \caption{\small KLD values of KLD$0$, KLD$1$, and KLD$\infty$ detectors under~$\bar{a}^*_{kld1}$ .}\label{fig: kld1value_stepattack}
    \end{minipage} 
    \begin{minipage}{0.49\textwidth}
    \centering
    \captionsetup{justification=centering}
    \includegraphics[scale=0.6]{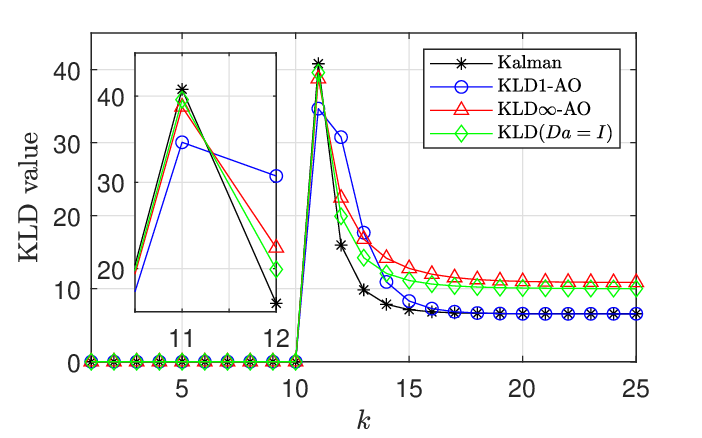} 
    \caption{\small KLD values of KLD$0$, KLD$1$, and KLD$\infty$ detectors under~$\bar{a}^*_{kld\infty}$.}
    \label{fig: kldinfvalue_stepattack}
    \end{minipage}
\end{figure}

\textbf{Step attack scenario.} In what follows, we evaluate the performance of the proposed detectors under their corresponding worst-case BIAs.
To this end, we first compute the KLD-based observers for $k=1$ and $k\to\infty$ using the LMI methods in Theorems~\ref{thm: KLD1LMI} and~\ref{thm: KLDinf_LMI}, respectively.
Then, following the bi-convex observer design methods in Theorem~\ref{thm: KLD1_BMI} and~\ref{thm: KLDinf BMI} and setting the derived LMI-based detector gains as initial points, we further optimize the observer gains using both the AO and ADMM algorithms.
During the optimization, the maximum number of iterations is set to~$N_{ite} = 50$, and the stopping tolerance is~$\epsilon = 1 \times 10^{-5}$. 
For the ADMM algorithm, the initial values for the Lagrange multiplier and the step size are chosen as~$\Lambda = 0$ and $\eta = 10000$, respectively. The better results between AO and ADMM are selected as the final observer gain, denoted by~$L^*_{kld1}$ and $L^*_{KLD \infty}$, respectively. 
The optimal attacks for KLD$1$ and KLD$\infty$ detectors denoted by~$\bar{a}^*_{kld1}$ and~$\bar{a}^*_{kld \infty}$ are derived as 
$\bar{a}^*_{kld1} = [0.9167 ~0.2650 ~0.2094]^{\top}$ and
$\bar{a}^*_{kld \infty} = [-0.0924 ~0.9781 ~0.8206]^{\top}$, respectively.
To illustrate the advantages of explicitly considering attacks on subsets of sensors, we also consider the case where $D_a = I$ and design the detection observer to enhance the attack detectability in the steady-state.

Fig.~\ref{fig: kldresidual_stepattack} shows the Mahalanobis distances of residuals generated by KLD$1$ and KLD$\infty$ detectors under their respective worst-case BIAs. 
As illustrated, the Mahalanobis distances for both detectors immediately exceed the threshold upon the attack onset and remain above it thereafter.
This demonstrates that the proposed detectors are able to detect the worst-case attacks, while the traditional Kalman filter fails to do so.

Figs.~\ref{fig: kld_stepattack}-\ref{fig: kldinfvalue_stepattack} depict the KLD values by the proposed KLD-based detectors under the three optimal attacks: $\bar{a}^*_{kal}$,~$\bar{a}^*_{kld1}$ and $\bar{a}^*_{kld\infty}$.
A consistent pattern can be observed across all figures.
Specifically, as the optimal solution for the case~$k=0$, the Kalman filter achieves the largest KLD value at attack onset.
In contrast, the KLD$1$ detector corresponding to the optimal solution for $k=1$ attains the largest KLD value at the next instant after the attack, demonstrating the best detection capability during the transient phase.
Finally, the KLD$\infty$ detector yields the largest KLD value as the system approaches steady state.
However, when considering $D_a = I$, the resulting KLD$\infty$ detector achieves smaller KLD values than the KLD$1$ detector.
This highlights the benefit of explicitly accounting for partially compromised sensors in the detector design.


\textbf{Stealthy attack scenario.}
In the practical situation, attacks are often slowly injected into sensors to remain stealthy. 
For example, the attack signal can be implemented as:
\begin{align*}
    a(k+1) = (1-\beta)a(k) + \beta \bar{a}^*_{kld \infty},  
\end{align*}
where $\beta \in [0,1]$ determines the injection rate, and we set $\beta = 0.01$ here.
Assume that sensors $1$, $2$, $3$, and $5$ in the thermal system are compromised. 
The attack impact matrix~$W$ is kept the same as the previous simulation example.

Based on the above setup, we design the KLD$1$ and KLD$\infty$ detectors using the methods in Theorems~\ref{thm: KLD1_BMI} and~\ref{thm: KLDinf BMI}, respectively. The design procedures and related parameters are identical to those used in the step attack simulation. 
For comparison, we also include the $\mathcal{H}_2/\mathcal{H}_{\_}$ filter, which is introduced in Subsection~\ref{subsec: H2Hmin}.
For fair comparison, we set the upper bound~$\delta$ as the trace of the covariance matrix of the residual induced by KLD$\infty$ detector, i.e.,~$\delta = \textup{trace}(\Sigma_{r_{\omega}}(L^*_{KLD \infty}))$.  
The simulation results are presented in Figs.~\ref{fig: residual_stealthyattack}-\ref{fig: probdet_stealthyattack}, where the total simulation steps are $800$ steps, and the attack occurs at $k=200$.
A data point is plotted every $20$ steps.

\begin{figure}[t]
    \begin{minipage}[]{0.49\textwidth}
    \centering
    \captionsetup{justification=centering}
   \includegraphics[scale=0.6]{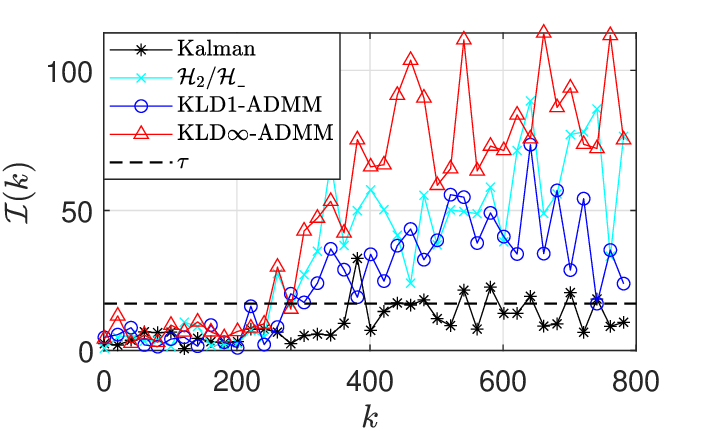} 
    \caption{\small Mahalanobis distance by different detectors under the stealthy attacks.}
    \label{fig: residual_stealthyattack}
    \end{minipage} 
    \begin{minipage}{0.49\textwidth}
    \centering
    \captionsetup{justification=centering}   
     \includegraphics[scale=0.6]{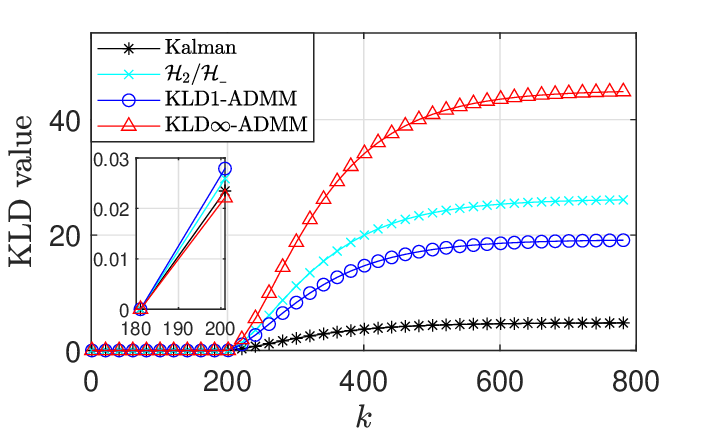} 
    \caption{\small KLD values by different detectors under the stealthy attacks.}\label{fig: kld_stealthyattack}
    \end{minipage}
\end{figure}

\begin{figure}[t]
    \centering
    \includegraphics[width=0.4\linewidth]{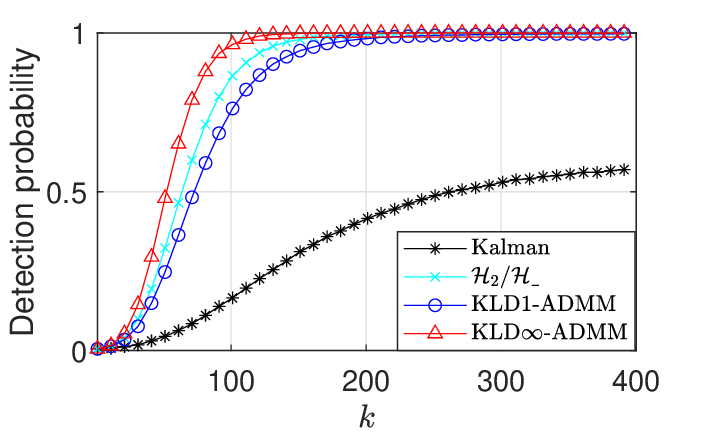}
    \caption{\small Detection probability at each step after the attack occurs.}
    \label{fig: probdet_stealthyattack}
\end{figure}

Fig.~\ref{fig: residual_stealthyattack} presents the Mahalanobis distance generated by different detectors under the stealthy attack.
As shown in the figure, the Mahalanobis distance produced by the traditional Kalman filter oscillates around the threshold after the attack occurs, making it unable to detect the attack timely and reliably.
The $\mathcal{H}_2/\mathcal{H}_{\_}$ filter and KLD$1$ detector show comparable performance.
In contrast, the Mahalanobis distance generated by the KLD$\infty$ detector is larger than all other methods, demonstrating its detection capability in the steady state.

Fig.~\ref{fig: kld_stealthyattack} shows the KLD values generated by these detectors over time, which provides a clearer characterization of detection performance than the Mahalanobis distance. 
Among all methods, the KLD$\infty$ detector exhibits the fastest KLD growth rate and the largest steady-state KLD value, indicating the best detection performance. 
The KLD$1$ detector achieves the highest KLD value during the transient phase, as depicted in the zoomed-in view, but its steady-state KLD value is lower than that of the $\mathcal{H}_2/\mathcal{H}_{\_}$ filter.
These observations are consistent with Fig.~\ref{fig: probdet_stealthyattack}, which depicts the detection probability over time via Monte Carlo simulations. 
Specifically, the detection probabilities of the KLD$1$, KLD$\infty$, and $\mathcal{H}_2/\mathcal{H}_{\_}$ filter detectors rapidly approach $1$ within a short period, demonstrating their fast detection speed. 
However, the detection probability of the Kalman filter increases slowly and never reaches a detection probability of one.


\section{Conclusions} \label{sec:conslusion}
This paper proposed design methods for detection observers against BIAs targeting subsets of sensors in stochastic linear systems.
By enhancing the detectability of the worst-case BIAs characterized by KLD, we formulate the observer design as a max–min optimization problem, which accounts for the trade-off between attack detectability and its impact. 
We show that the Kalman filter maximizes the detectability of worst-case BIAs at the attack onset.
For the one-step and steady-state scenarios, the resulting design problems are inherently non-convex. We further derived bi-convex formulations and provided LMI relaxations that enable tractable observer synthesis. 
These results offer a systematic and theoretically grounded approach for stealth-aware detector design in stochastic linear systems.
Future work will focus on extending the proposed framework to incorporate active detection mechanisms to address broader classes of adversarial attacks.

\appendix
\renewcommand{\thesection}{\Alph{section}}
\section{Technical Proofs}\label{sec: proof}
We first provide some useful matrix relations and inequalities that are instrumental in later proofs.

\begin{lemma}[Matrix relations]\label{lem: Matrix_Ineq}
    Let $S$ be a positive definite matrix,~$X$ and $Y$ be matrices of appropriate dimensions. The following matrix relations hold:
    \begin{enumerate}
        \item \textbf{Young's relation:} $X^\top S^{-1}X  \succeq X^\top Y + Y^\top X - Y^\top S Y$ as $(X-SY)^\top S^{-1} (X-SY) \succeq 0$. 
        \item \textbf{Generalized Young's relation:} $X^{\top} S^{-1} Y+ Y^{\top} S^{-1} X \preceq \frac{1}{\gamma}X^{\top} S^{-1} X + \gamma Y^{\top} S^{-1} Y$ for $\gamma \in \R_+$, because~$(\frac{1}{\sqrt{\gamma}}X^{\top} - \sqrt{\gamma}Y^{\top})S^{-1}(\frac{1}{\sqrt{\gamma}}X - \sqrt{\gamma}Y) \succeq 0$.
        \item \textbf{Binomial inverse theorem:} $(S+XY)^{-1}=S^{-1}-S^{-1}X\left(I+YS^{-1}X\right)^{-1}YS^{-1}$ provided $I+YS^{-1}X$ is not singular. 
    \end{enumerate}
\end{lemma}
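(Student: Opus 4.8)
The plan is to prove the three relations by direct algebraic manipulation, since each is equivalent to the nonnegativity of a manifest ``square'' or to a textbook inverse identity. For parts (1) and (2) I would use the completing-the-square argument already hinted at in the statement, and for part (3) a direct verification of the Woodbury-type formula. No appeal to eigenvalues or to the earlier lemmas is needed; everything reduces to expanding a product and rearranging.

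For part (1), I would expand the quadratic form $(X-SY)^\top S^{-1}(X-SY)$. Exploiting $S=S^\top$ (hence $(SY)^\top = Y^\top S$) together with $S^{-1}S = SS^{-1} = I$, the cross terms collapse to $-X^\top Y - Y^\top X$ and the pure-$Y$ term to $Y^\top S Y$, giving $(X-SY)^\top S^{-1}(X-SY) = X^\top S^{-1}X - X^\top Y - Y^\top X + Y^\top S Y$. Since $S \succ 0$ forces $S^{-1} \succ 0$, the left-hand side is positive semidefinite, and rearranging yields Young's relation. Part (2) is identical in spirit: expanding $(\tfrac{1}{\sqrt\gamma}X^\top - \sqrt\gamma Y^\top)S^{-1}(\tfrac{1}{\sqrt\gamma}X - \sqrt\gamma Y) \succeq 0$, the $\sqrt\gamma$ factors cancel in the two cross terms while leaving $\tfrac{1}{\gamma}$ and $\gamma$ on the diagonal terms, so rearranging produces the stated generalized bound.

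For part (3), I would verify the identity directly rather than derive it: left-multiply the claimed inverse by $(S+XY)$ and check that the product telescopes to the identity. The key manipulation is to factor $X(I+YS^{-1}X)^{-1}YS^{-1}$ out of the two middle terms so that $(I+YS^{-1}X)(I+YS^{-1}X)^{-1} = I$ appears, at which point the surviving terms cancel; this step is exactly where the hypothesis that $I+YS^{-1}X$ be nonsingular enters. The main obstacle throughout is purely one of bookkeeping --- keeping the symmetry of $S^{-1}$ straight so the cross terms combine into symmetric pairs in (1) and (2), and tracking the factorization in (3) --- rather than any conceptual difficulty, since all three parts are completing-the-square or direct-verification arguments.
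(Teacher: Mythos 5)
Your proposal is correct and matches the paper's own justification, which is given inline in the lemma statement itself: parts (1) and (2) follow from expanding exactly the same positive semidefinite squares $(X-SY)^\top S^{-1}(X-SY)$ and $(\tfrac{1}{\sqrt\gamma}X - \sqrt\gamma Y)^\top S^{-1}(\tfrac{1}{\sqrt\gamma}X - \sqrt\gamma Y)$, and part (3) is the standard Woodbury-type identity verified by direct multiplication. There is no gap; your write-up simply spells out the bookkeeping the paper leaves implicit.
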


\begin{proof}[Proof of Theorem~\ref{thm: KLD1LMI}]
Applying Schur complement to~\eqref{eq: LMIs_KLD1 c2} leads to
\begin{align}\label{eq: KLD1_LMI_proof1}
        &\begin{bmatrix}
         Y^{\top}D_a + D^{\top}_a Y   &D^{\top}_a R^{\top}_W \\
         R_W D_a &2 \mu I
        \end{bmatrix}
        - \begin{bmatrix}
            \Gamma_1 &0\\
            0        &\mu I
        \end{bmatrix} \begin{bmatrix}
            \Gamma_2 &0\\
            0 &\mu I
        \end{bmatrix}^{-1} \begin{bmatrix}
            \Gamma^{\top}_1 &0\\
            0        &\mu I
        \end{bmatrix} \notag \\
        = &\begin{bmatrix}
         Y^{\top}D_a + D^{\top}_a Y   &D^{\top}_a R^{\top}_W \\
         R_W D_a &2 \mu I
        \end{bmatrix} - 
        \begin{bmatrix}
        \Gamma_1 \Gamma^{-1}_2 \Gamma^{\top}_1 &0\\
        0 &\mu I
    \end{bmatrix} \succeq 0,
\end{align}
where~$\Gamma_1 \Gamma^{-1}_2 \Gamma^{\top}_1 = (2+1/\gamma)Y^{\top}CP^{-1}C^{\top}Y + \gamma D^{\top}_a G^{\top} P^{-1} G D_a + 2Y^{\top}D_{\omega} D^{\top}_{\omega} Y$.
Based on the generalized Young's relation in Lemma~\ref{lem: Matrix_Ineq}, it holds that
\begin{align*}
 \frac{1}{\gamma} Y^{\top}CP^{-1}C^{\top}Y + \gamma D^{\top}_a G^{\top} P^{-1} G D_a   
 \succeq  Y^{\top}CP^{-1}GD_a + D^{\top}_a G^{\top} P^{-1} C^{\top}Y.
\end{align*}
Therefore, $\Gamma_1 \Gamma^{-1}_2 \Gamma^{\top}_1 \succeq 2Y^{\top}(CP^{-1}C^{\top} + D_{\omega} D^{\top}_{\omega})Y + Y^{\top}CP^{-1}GD_a + D^{\top}_a G^{\top} P^{-1} C^{\top}Y$. 
Together with the inequality~\eqref{eq: KLD1_LMI_proof1}, we have
\begin{align*}
    &\begin{bmatrix}
         Y^{\top}D_a + D^{\top}_a Y   &D^{\top}_a R^{\top}_W \\
         R_W D_a &2 \mu I
        \end{bmatrix}
    - 
    \begin{bmatrix}
        \begin{array}{l}
             Y^{\top}CP^{-1}GD_a + D^{\top}_a G^{\top} P^{-1} C^{\top}Y \\+ 2 Y^{\top}(CP^{-1}C^{\top}+D_{\omega} D^{\top}_{\omega})Y 
        \end{array}  &0\\
        0 &\mu I
    \end{bmatrix} \\
    = &\begin{bmatrix}
         \varphi(Y,L) - 2 Y^{\top} (C P^{-1} C + D_{\omega}D^{\top}_{\omega}) Y  &D^{\top}_a R^{\top}_W \\
         R_W D_a &\mu I
    \end{bmatrix}
    \succeq 0,
\end{align*}
where $L = P^{-1}G$ and $\varphi(Y,L) := Y^{\top}(I-CL)D_a + D_a^{\top}(I-CL)^{\top}Y$. 
Again, utilizing the Schur complement yields
\begin{align*}
 \varphi(Y,L) - 2 Y^{\top} (C P^{-1} C + D_{\omega}D^{\top}_{\omega}) Y - \lambda D^{\top}_a W D_a \succeq 0,
\end{align*}
where~$\mu = 1/\lambda$ and~$W = R^{\top}_W R_W$.
Moreover, according to Lemma~\ref{lem: Alt_InvMat}, inequalities~\eqref{eq: LMIs} ensure that $\rho(A-LC)<1$ and $\Sigma_{r_{\omega}}^{-1}(L) \succ (C P^{-1} C^{\top} + D_{\omega} D_{\omega}^{\top} )^{-1} \succ \tilde{Z}$.
Therefore, it holds that~$\varphi(Y,L) - 2Y^{\top} \Sigma_{r_{\omega}}(L) Y - \lambda D^{\top}_a W D_a \succeq 0$.
Recall the inequality~\eqref{eq: KLD1_BMI_proof 1}, we obtain
\begin{align*}
     &\frac{1}{2}D^{\top}_a (I-CL)^{\top}\Sigma^{-1}_{r_{\omega}}(L) (I-CL) D_a - \lambda D^{\top}_a W D_a \succeq 0,
\end{align*}
This proves the sufficiency of the conditions~\eqref{eq: LMIs_KLD1} for~\eqref{eq: OptPro_KLD1}.
Finally, by maximizing $1/\mu$ (equivalent to minimizing $\mu$) with the obtained sufficient conditions, we obtain the approximated optimization problem in Theorem~\ref{thm: KLD1LMI}, which completes the proof.
\end{proof}

\begin{proof}[Proof of Lemma~\ref{lem: Alternative trans matrix}]
Setting $S=I-A$, $X=L$, and $Y=C$ in the Binomial inverse theorem in Lemma~\ref{lem: Matrix_Ineq} yields
\begin{align*}
    &C(I-A + LC)^{-1} L \\
    =&C (I-A)^{-1}L - C(I-A)^{-1} L\left(I + C(I-A)^{-1}L\right)^{-1} C(I-A)^{-1}   L \\
    =&ML - ML(I+ML)^{-1}ML.
\end{align*}
Note that~$ML - ML(I+ML)^{-1}ML$ is the expansion of~$((ML)^{-1} + I)^{-1}$ based on the Binomial inverse theorem by setting~$S=(ML)^{-1}$, $X = Y = I$.
The inverse term~$((ML)^{-1} + I)^{-1}$ can be further written as
\begin{align*}
    \left((ML)^{-1} + I\right)^{-1} 
     = \left((ML)^{-1} + I\right)^{-1} (ML)^{-1} ML = \left(I + ML\right)^{-1} ML.
\end{align*}
Therefore, we have~$ C(I-A + LC)^{-1} L = \left(I + ML\right)^{-1} ML$.
Substituting this equality into~$\Phi(\infty,L)$ leads to
\begin{align*}
    \Phi(\infty,L) &= I - \left(I + ML\right)^{-1} ML \\
    &= \left(I + ML\right)^{-1} \left(I + ML\right) - \left(I + ML\right)^{-1} ML = (I+ML)^{-1}.
\end{align*}
This completes the proof.
\end{proof}

\begin{proof}[Proof of Theorem~\ref{thm: KLDinf_LMI}]
    Applying the Schur complement to~\eqref{eq: LMIs_KLDinf} leads to 
    \begin{align*}
    \Theta_1
    - \begin{bmatrix}
        \Theta_2 &\Theta_3
    \end{bmatrix}
    \begin{bmatrix}
        \gamma P^{-1} &0\\
        0 &\frac{1}{\gamma} P^{-1}
    \end{bmatrix}
    \begin{bmatrix}
        \Theta^{\top}_2 \\ \Theta^{\top}_3
    \end{bmatrix}  
    =\Theta_1 - \gamma \Theta_2P^{-1} \Theta^{\top}_2
    -\frac{1}{\gamma} \Theta_3 P^{-1} \Theta_3^{\top} 
    \succeq 0.
    \end{align*}
According to the generalized Young's relation in Lemma~\ref{lem: Matrix_Ineq}, it holds that 
\begin{align*}
     &-\Theta_2 P^{-1} \Theta_3^{\top}
    -\Theta_3 P^{-1} \Theta_2^{\top}
    \succeq 
    -\gamma \Theta_2 P^{-1} \Theta_2^{\top}
    -\frac{1}{\gamma} \Theta_3 P^{-1} \Theta_3^{\top}.
\end{align*}
As a result, we have
\begin{align*}
    &\Theta_1 -\Theta_2 P^{-1} \Theta_3^{\top} -\Theta_3 P^{-1} \Theta_2^{\top} \\
    =&\begin{bmatrix}
        D^{\top}_a Y + Y^{\top} D_a-\lambda D^{\top}_a W D_a &Y^{\top}(I+ML)\\
        * &\frac{1}{2} \tilde{Z}
    \end{bmatrix} \succeq 0,
\end{align*}
where $L=P^{-1}G$.
Again, utilizing the Schur complement, the above inequality can be further written as
\begin{align*}
    D^{\top}_a Y + Y^{\top} D_a - 2Y^{\top}(I+ML) \tilde{Z}^{-1} (I+ML)^{\top} Y - \lambda D^{\top}_a W D_a \succeq 0.
\end{align*}
Based on Lemma~\ref{lem: Alt_InvMat}, conditions~\eqref{eq: LMIs_KLD1 c1} ensure that $\rho(A-LC)<1$ and $\Sigma_{r_{\omega}}^{-1}(L) \succ \tilde{Z}$.
As a result, the following inequalities hold
\begin{align*}
    &\frac{1}{2}D^{\top}_a (I + ML )^{-\top} \Sigma_{r_{\omega}}(L)^{-1} (I + ML )^{-1} D_a - \lambda D^{\top}_a W D_a \\
    \succeq 
    &D^{\top}_a Y + Y^{\top} D_a - 2Y^{\top}(I+ML) \Sigma_{r_{\omega}}(L) (I+ML)^{\top} Y - \lambda D^{\top}_a W D_a 
    \succeq 0,
\end{align*}
where the first inequality has been proved in~\eqref{eq: KLDinf_BMI_proof 1}. 
The satisfaction of constraints in~\eqref{eq: OptPro_KLD_inf} under conditions in~\eqref{eq: LMIs_KLDinf} is thus demonstrated.
This completes the proof.
\end{proof}

\bibliographystyle{elsarticle-num}
\bibliography{ref}
\end{document}